\documentclass[dvips,11pt,letter]{article}

%Total left margin: 1in + \hoffset + \oddsidemargin (or \even...)
\textwidth=6.5in                                % Text width (8.5in - margins).
\hoffset=0in
\oddsidemargin=0in                                % Left margin minus 1 inch.
\evensidemargin=0in                                % Same for even-numbered pages.

%total top margin: 1in + \voffset + \topmargin + \headheight + \headsep
\textheight=9in                                        % Body height (incl. footnotes)
\voffset=0in
\topmargin=-0.5in                                        % Top margin minus 1 inch.
\headheight=0.3in                                        % Distance from .
\headsep=0.2in                                        % Distance from header to body.

\skip\footins=4ex                                % Space above first footnote.
\hbadness=10000                                        % No "underfull hbox" messages.

\usepackage{amsmath}
\usepackage{amsthm}
\usepackage{amssymb}

\usepackage{graphicx}

\usepackage{algorithmic} %seems incompatible with JACM
\usepackage{algorithm} %must be after hyperref import for alg refs to work right

%==============================================================================
% Macros.
%==============================================================================

% Math organization
\newcommand{\oneone}{1\!\!1}
\newcommand{\one}[1]{\oneone\left(#1\right)}

\newcommand{\piecewise}[1]{\left\{\begin{array}{@{\;}ll}#1\end{array}\right.}
\newcommand{\set}[1]{\{#1\}}                        % Set (as in \set{1,2,3})
\newcommand{\setof}[2]{\{\,{#1}\::\:{#2}\,\}}        % Set (as in \setof{x}{x > 0})

%Basic math
%\newcommand{\binom}[2]{\left(\begin{array}{@{}c@{}} #1 \\ #2 \end{array}\right)} %\binom{n}{k}: (n k)
%\newcommand{\binomsm}[2]{\left(\scriptstyle \begin{array}{@{}c@{}} \scriptstyle #1 \\ \scriptstyle #2 \end{array}\right)} %a scriptstyle version of binom for inline eqns.
%\newcommand{\lg}[1]{\log_2 {#1}}
\newcommand{\sm}{\setminus} % "set minus"
                % Complement of ...            

\DeclareMathOperator{\argmin}{argmin}

% standard sets
                        % Complex numbers.
                     % Positive integers.
                     % Rationals.
\newcommand{\R}{\mathbb{R}}                     % Reals.
\newcommand{\Z}{\mathbb{Z}}                     % Integers.

% Probability

\newcommand{\E}[1]{\mathbf{E}\left[#1\right]}

% text
 %''text and''
 %''text and''
 %''text and''

% asymptotics

 %for algorithms

%%%%%%%%%%%%% Theorems,  Lemmas and other environments %%%%%%%%%%%%%%

\newtheorem{theorem}{Theorem} %[section]

\newtheorem{anonCorollary}{Corollary}
\newtheorem{corollary}[theorem]{Corollary}

\newtheorem{lemma}[theorem]{Lemma}

\addtolength{\marginparwidth}{-10pt}
\newcommand{\footcomment}[1]{} % uncomment this line to hide author comments
\newcommand{\margincomment}[1]{} % uncomment this line to hide author comments

%%%%%%% Macros specific to this paper

\newcommand{\unaryOrdering}{\!\!\mapsto\!\!}
\newcommand{\addApprox}{\textsc{AddApprox}}

\newcommand{\betTour}{\textsc{BetweennessTour}}
\newcommand{\fast}{\textsc{FAST}}

\newcommand{\posns}{\mathcal{P}}

\newcommand{\bp}{\,\,\rule[-0.22em]{0.08em}{1em}\,\,} %binary piecewise, i.e. f \bp g is f on f's domain and g on g's domain
\newcommand{\restrictO}[2]{{#1}_{#2}}

\newcommand{\roundOpt}{\sigma^{\Box}} %rounded pi*

\newcommand{\costlyB}{2 \binom{k}{2} \epsilon \binom{n-1}{k-1}} %

\newcommand{\versions}[2]{#2} %full version
\newcommand{\fullOnly}[1]{\versions{}{#1}}

%%%%%%%%%%%%%%%%%%%%%%%%%%%%%%%%%%%%%%%%%%%%%%%%%%%%%
\begin{document}

\title{\huge
  \textbf{
    Approximation Schemes for the~Betweenness~Problem in Tournaments   
    and Related Ranking Problems\\[2ex]}}
\author{
  Marek Karpinski\thanks{
    Dept. of Computer Science, University of Bonn.
    Parts of this work done while visiting Microsoft Research.
    Email: {\tt marek@cs.uni-bonn.de}}
  \and
  Warren Schudy\thanks{
    Dept. of Computer Science, Brown University.
    Parts of this work done while visiting University of Bonn.
    Email: {\tt ws@cs.brown.edu}}\\[2ex]}
\date{}
\maketitle

%%%%%%
\begin{abstract}
{\normalsize
We design the first \emph{polynomial time approximation schemes (PTASs)} for the
\emph{Minimum Betweenness} problem in tournaments and some related higher arity
ranking problems. This settles the approximation status of the Betweenness problem
in tournaments along with other ranking problems which were open for some time now. The results depend on a new technique of dealing with fragile ranking constraints and could be of independent interest.
}
\end{abstract}

%\setcounter{page}{0}
%\clearpage

%%%%%%
\section{Introduction}

We study the approximability of the Minimum Betweenness problem in tournaments
(see \cite{AA07}) that resisted so far efforts of designing polynomial time approximation algorithms with a constant approximation ratio.
For the status of the general Betweenness problem, see e.g. \cite{O79,CS98,AA07,CGM09}.

In this paper we design the first \emph{polynomial time approximation scheme} (PTAS) for that problem, and generalize it to much more general class of ranking CSP problems, called here \emph{fragile} problems.
To our knowledge it is the first nontrivial approximation algorithm for the Betweenness problem in tournaments.

In the Betweenness problem we are given a ground set of \emph{vertices} and a set of \emph{betweenness constraints} involving $3$ vertices and a \emph{designated} vertex among them.
The cost of a ranking of the elements is the number of betweenness constraints with the designated vertex not between the other two vertices.
The goal is to find a ranking \emph{minimizing} this cost.
We refer to the Betweenness problem in tournaments, that is in instances with a constraint for every triple of vertices, as the \betTour{} or \emph{fully dense} Betweenness problem (see \cite{AA07}).
We consider also the $k$-ary extension $k$-\fast{} of the Feedback Arc Set in tournaments (\fast) problem (see \cite{mathieu09fast,A07,ACN08}).

We extend the above problems by introducing a more general class of \emph{fragile ranking} $k$-CSP problems.
A \emph{constraint} $S$ of a ranking $k$-CSP problem is called \emph{fragile} if no two rankings of the vertices $S$ that both satisfy the constraint differ by the position of a single vertex.
A \emph{ranking} $k$-CSP problem is called \emph{fragile} if all its
constraints are fragile.

We now formulate  our main results.
\begin{theorem} \label{thm:main}
There exists a PTAS for the \betTour{} problem.
\end{theorem}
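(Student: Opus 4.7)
The plan is a two-phase algorithm. Phase~1 produces a ranking of cost at most $\mathrm{OPT}+\epsilon\binom{n}{3}$ (an additive approximation); Phase~2 converts this additive bound into a $(1+\epsilon)$-multiplicative bound by exploiting the fragility of the Betweenness constraint.

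For Phase~1, I would use exhaustive sampling in the style of known PTASs for dense ranking problems. Draw a random sample $S\subseteq V$ of size $s$ depending polynomially on $1/\epsilon$, and enumerate all $s!$ orderings of $S$ (polynomial in $n$ for constant $\epsilon$). For each candidate ordering of $S$, extend to a full ranking by placing each $v\notin S$ in the $S$-gap that minimizes the number of violated betweenness triples among those spanning $\{v\}$ together with two elements of $S$; return the cheapest extension. Chernoff--Hoeffding concentration, combined with a union bound over gaps and over $v$, shows that the best extension has cost within $\epsilon\binom{n}{3}$ of $\mathrm{OPT}$ with high probability. For Phase~2, if $\mathrm{OPT}\ge\epsilon\binom{n}{3}$ the additive bound is already multiplicative; otherwise I would apply a single-vertex-move local-search cleanup to the Phase~1 output. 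Fragility of Betweenness---the two satisfying orderings of a triple $(a,b,c)$ with designated middle $b$, namely $abc$ and $cba$, differ in all three positions---powers an exchange argument: transforming a locally optimal $\pi$ into $\mathrm{OPT}$ one vertex-move at a time, fragility prevents any move from simultaneously fixing and breaking triples at comparable rates, so local optimality forces near-global optimality.

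The main obstacle will be the fragility-based analysis in the low-$\mathrm{OPT}$ regime. Without fragility, single-vertex moves can be trapped in local optima far from $\mathrm{OPT}$, and Phase~2 would fail; with fragility, a careful amortization is needed, charging each suboptimal move against a triple that $\pi$ violates but $\mathrm{OPT}$ satisfies, and showing the total charge stays within the additive error of Phase~1. A secondary subtlety is calibrating the sample size $s$ in Phase~1 so the Chernoff concentration survives the union bound over all $s!$ orderings of $S$ and all $O(ns)$ candidate placements of non-sampled vertices, which constrains how $s$ grows with $\epsilon$. A final technicality is handling instances with $\mathrm{OPT}$ essentially zero: here fragility guarantees that any ranking close to satisfying all triples is structurally rigid, so the local-search cleanup (or a direct DP on the rigid structure) recovers $\mathrm{OPT}$ exactly.
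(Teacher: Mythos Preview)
Your Phase~1 is fine and matches the paper's first step (invoke an additive $\epsilon n^3$ approximation and be done if $OPT$ is large). The gap is in Phase~2.

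Your Phase~2 claim---that single-vertex-move local search from a good starting point reaches $(1+\epsilon)OPT$ because of fragility---is not justified, and the paper does \emph{not} proceed this way. Fragility says a \emph{satisfied} triple becomes unsatisfied under any single-vertex move; it says nothing about unsatisfied triples. So a single move can fix and break arbitrary numbers of triples; local optimality only gives ``fixed $\le$ broken'' for every move, which by itself does not bridge the gap between the additive error $\epsilon n^3$ and $(1+\epsilon)OPT$ when $OPT \ll \epsilon n^3$. Your proposed exchange/amortization argument (``charge each suboptimal move against a triple $\pi$ violates but $OPT$ satisfies'') needs a quantitative lemma relating local optimality, Kendall--Tau distance, and cost difference for a $3$-ary objective, and you have not supplied one. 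The paper explicitly says at the start of Section~3 that new techniques beyond the weighted-\fast{} template are required.

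What the paper actually does in the low-$OPT$ regime is quite different: two rounds of sampling plus greedy produce an ordering $\sigma^2$ that is pointwise within $O(\epsilon n)$ of an optimal ranking on all but $O(OPT/(\epsilon n^{k-1}))$ ``ambiguous'' vertices. From $\sigma^2$ they build a weighted \emph{binary} \fast{} instance $w^{\sigma^2}$ that linearizes the $3$-ary cost near $\sigma^2$ (the ``tangent line'' idea). Fragility is used here, but not to analyze local search---it is used to show the surrogate has bounded weight ratio $w_{uv}+w_{vu}=\Theta(n^{k-2})$, i.e.\ is a genuine weighted tournament (Lemma~\ref{lem:wBarFAST}). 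They then invoke the existing \fast{} PTAS on this $2$-ary surrogate, and a separate cancellation trick ($\bar w$) plus error analysis (Lemmas~\ref{lem:w}--\ref{lem:piThree}) transfers the guarantee back to the original objective. The ambiguous vertices are reinserted greedily at the end. The reduction to weighted \fast{} is the key idea you are missing.
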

The above answers an open problem of \cite{AA07} on the approximation status of the Betweenness problem in tournaments.

We now formulate our first generalization.
\begin{theorem} \label{thm:strongFragile}
There exist PTASs for all \emph{fragile ranking} $k$-CSP problems in tournaments.
\end{theorem}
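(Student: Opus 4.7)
The plan is a two-phase template: Phase~(i) is an additive-error PTAS that works for any dense ranking $k$-CSP and does not use fragility, and Phase~(ii) is a fragility-driven boosting step that converts an additive $\epsilon' n^k$ error into a multiplicative $(1+\epsilon)$ guarantee by handling the small-$\mathrm{OPT}$ regime.

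For Phase~(i) I would adapt the Frieze--Kannan / Arora--Karger--Karpinski sampling framework for dense CSPs to the ranking setting. Take a uniform sample $U$ of $s = s(\epsilon',k)$ vertices, brute-force over all $s!$ orderings of $U$, and for each ordering assign every non-sample vertex $v$ to one of the $s+1$ gaps by evaluating its empirical per-vertex cost against $U$ alone. A Chernoff--Hoeffding plus union-bound argument standard for dense $k$-CSPs shows that, with high probability, the best such guess has cost at most $\mathrm{OPT}+\epsilon' n^k$ in time $n^{\mathrm{poly}(1/\epsilon',k)}$. The analysis is insensitive to the particular predicate, so it applies uniformly to every fragile ranking $k$-CSP in tournaments.

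Phase~(ii) contains the substantive new content. Fix a threshold $\tau=\tau(\epsilon,k)$. If $\mathrm{OPT}\geq \tau n^k$, then Phase~(i) with $\epsilon'=\epsilon\tau$ already yields a $(1+\epsilon)$-approximation. Otherwise $\mathrm{OPT}<\tau n^k$ and I would refine the Phase~(i) output $\sigma$ by a polynomial-time local repair in the spirit of the Kenyon--Mathieu--Schudy refinement for \fast{}: a simple counting argument bounds by $O(\tau n)$ the number of vertices of $\sigma$ that are ``misplaced'' relative to any fixed optimum (each such vertex must carry $\Omega(n^{k-1})$ violated constraints, and the total budget is $O(\tau n^k)$), and these can be repositioned by a window-based dynamic program that exhaustively searches inside windows of carefully chosen size. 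Fragility is what makes such a local repair possible: since no two satisfying assignments of any constraint differ in the position of a single vertex, the satisfaction status of each constraint is rigid under single-vertex moves, so the cost changes caused by repositioning a misplaced vertex can be accounted for cleanly without hidden cancellations elsewhere.

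The main obstacle is making Phase~(ii) rigorous uniformly across all fragile $k$-ary predicates rather than merely for Betweenness. One must extract from the purely combinatorial definition of fragility a quantitative inequality stating that, in a near-optimal ranking, the cost charged to a vertex $v$ is a well-behaved function of $v$'s displacement from its optimal rank --- strong enough that correcting the $O(\tau n)$ misplaced vertices via the window DP loses at most a $(1+\epsilon)$ factor, rather than merely preserving the additive guarantee. Without fragility, a single vertex could be shifted by many positions while flipping constraint satisfactions in both directions at equal rates, and the DP accounting would break. Proving this inequality uniformly for every fragile predicate, and tuning the window sizes so the DP is polynomial while still catching every misplaced vertex, is where the technical core of Theorem~\ref{thm:strongFragile} lies, and is the step that genuinely exploits fragility rather than mere density.
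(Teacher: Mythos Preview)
Your Phase~(i) and the large-$\mathrm{OPT}$ branch are fine and match the paper: run an additive $\epsilon' n^k$ approximation, and if $\mathrm{OPT}\ge \mathrm{poly}(\epsilon)n^k$ this already gives a PTAS.

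Your Phase~(ii), however, is not a proof and does not resemble what the paper actually does. You propose to take the Phase~(i) output $\sigma$, locate the $O(\tau n)$ ``misplaced'' vertices, and reposition them by a ``window-based dynamic program,'' with fragility guaranteeing clean accounting. Two concrete gaps:
\begin{itemize}
\item The counting argument you sketch bounds the number of vertices with large per-vertex cost in $\pi^*$ (or in $\sigma$), not the number of vertices whose position in $\sigma$ is far from their position in $\pi^*$. Passing from an additive error on total cost to a per-vertex displacement bound requires exactly the fragility inequality you flag as the ``main obstacle,'' and you do not prove it. In fact a global additive bound $C(\sigma)\le 2\tau n^k$ does not by itself force $\sigma$ to be Kendall-close to $\pi^*$.
\item The ``window DP'' is never specified: what are the windows, what is the DP state, and why does exhaustively repositioning $O(\tau n)$ vertices inside windows not cost $n^{\Omega(\tau n)}$ time? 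You yourself write that making this step rigorous ``is where the technical core \ldots\ lies''; that is an acknowledgement that the proof is missing, not a proof.
\end{itemize}

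The paper's route for small $\mathrm{OPT}$ is entirely different and contains the real idea. It does \emph{not} repair a Phase~(i) ordering by DP. Instead it (a) uses a separate sampling-plus-two-greedy-passes procedure to obtain an ordering $\sigma^2$ over a large set $U$ with $|\sigma^2(v)-\pi^*(v)|=O(\epsilon n)$ for every $v\in U$; (b) \emph{linearizes} the $k$-ary objective around $\sigma^2$ into a weighted \fast{} instance $w^{\sigma^2}$, where $w_{uv}$ counts the $k$-ary constraints through $\{u,v\}$ that would be violated if $v$ sat just before $u$; (c) cancels opposing weights to form $\bar w$ so the \fast{} optimum is $O(\mathrm{OPT})$ rather than merely $O(\epsilon n^k)$; and (d) invokes the existing \fast{} PTAS as a black box, then greedily reinserts the few ambiguous vertices. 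Fragility enters precisely to show that $w_{uv}+w_{vu}$ is bounded below (Lemma~\ref{lem:wBarFAST}), i.e.\ that the linearized instance really is a \fast{} tournament, and to prove the displacement bound (Lemma~\ref{lem:fragile}). None of this structure --- the linearization to \fast{}, the cancellation trick, or the invocation of the \fast{} PTAS --- appears in your proposal.
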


Theorem \ref{thm:strongFragile} entails, among other things, existence of a PTAS for the $k$-ary
extension of \fast{}. A PTAS for 2-\fast{} was given in \cite{mathieu09fast}.

\begin{anonCorollary}
There exists a PTAS for the $k$-\fast{} problem.
\end{anonCorollary}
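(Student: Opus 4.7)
The plan is to derive the corollary as a direct specialization of Theorem~\ref{thm:strongFragile}. Two hypotheses need to be verified: first, that $k$-\fast{} fits the ranking $k$-CSP-in-tournaments framework; second, that every constraint arising in a $k$-\fast{} instance is fragile in the sense of the paper.

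For the first point, I would recall that an instance of $k$-\fast{} on a vertex set $V$ specifies, for every $k$-subset $T \subseteq V$, a designated linear ordering $\pi_T$ on $T$, and the objective is to find a global ranking $\sigma$ of $V$ minimizing the number of $k$-subsets $T$ on which the ordering induced by $\sigma$ disagrees with $\pi_T$. This is precisely a ranking $k$-CSP problem whose constraint on $T$ is satisfied by $\sigma$ iff $\sigma$ induces the specified ordering $\pi_T$ on $T$. Since every $k$-subset carries exactly one such constraint, the instance is ``in tournaments'' in the paper's sense.

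For the second point I would check fragility constraint-by-constraint. Fix any $k$-subset $T$ with its designated ordering $\pi_T$. The only ordering of the vertices of $T$ that satisfies the constraint is $\pi_T$ itself, so there do not exist two \emph{distinct} satisfying rankings of $T$ at all. The requirement that ``no two satisfying rankings of $T$ differ by the position of a single vertex'' therefore holds vacuously, and each constraint is fragile. Applying Theorem~\ref{thm:strongFragile} immediately yields a PTAS for $k$-\fast{}.

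The only step that requires any care at all is confirming that the standard formulation of $k$-\fast{} used in prior work matches the single-preferred-ordering-per-$k$-tuple formulation above; once this definitional matter is settled, the corollary is an immediate invocation of Theorem~\ref{thm:strongFragile} and no new algorithmic machinery is needed, since all the substantive work lies in the proof of that theorem.
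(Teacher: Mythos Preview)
Your proposal is correct and matches the paper's approach exactly: the paper states that a $k$-\fast{} constraint over $S$ is satisfied by one particular ranking of $S$ and no others, observes that such constraints are (vacuously) fragile, and then derives the corollary directly from Theorem~\ref{thm:strongFragile}. No additional argument beyond what you wrote is given or needed.
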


We  generalize \betTour{} to arities $k \ge 4$ by specifying for each constraint $S$ a pair of vertices in $S$ that must be placed at the ends of the ranking induced by the vertices in $S$. Such constraints do not satisfy our definition of fragile, but do satisfy a weaker notion that we call \emph{weak fragility}. The definition of weakly fragile is identical to the definition for fragile except that only four particular single vertex moves are considered, namely swapping the first two vertices, swapping the last two, and moving the first or last vertex to the other end.
We now formulate our most general theorem.
\begin{theorem} \label{thm:weakFragile}
There exist PTASs for all \emph{weak-fragile ranking} $k$-CSP problems in tournaments.
\end{theorem}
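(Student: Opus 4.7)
The plan is to follow the same PTAS architecture developed for Theorem \ref{thm:strongFragile} and identify exactly the step where ordinary fragility was used, then patch that step using the slightly weaker structural guarantees of weak-fragility. As with the earlier theorems, I expect the PTAS to reduce to an additive approximation statement: produce a ranking $\sigma$ with cost at most $\mathrm{OPT}+\epsilon\binom{n}{k}$. Because any nontrivial tournament instance whose optimum is not already zero must have $\mathrm{OPT}=\Omega(n^k)$ (there are $\binom{n}{k}$ constraints and weak-fragility forces a constant density of violations on any grossly wrong ranking), an additive $\epsilon n^k$ guarantee converts to a multiplicative $(1+O(\epsilon))$-approximation in the standard way.

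My plan for the additive step is to perform local-search-plus-sampling, as in the fragile case, but restrict the set of ``improving moves'' to exactly those that weak-fragility makes informative: endpoint swaps and end-to-end transpositions \emph{inside each constraint window}, plus the usual single-vertex reinsertion moves that are needed for the constant-factor start. Concretely, I would first solve the problem on a random sample of $O(\epsilon^{-c})$ vertices exactly (or by exhaustive search), then greedily insert the remaining vertices one at a time; afterwards, iterate the allowed local moves until no single move of one of the four weak-fragile types improves the global cost. The correctness analysis then compares $\sigma$ to the optimum $\sigma^*$ constraint-by-constraint.

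The hardest step, and the place where the entire extension stands or falls, is the charging argument that bounds the number of constraints on which $\sigma$ and $\sigma^*$ disagree. In the strongly fragile case, for every constraint that $\sigma^*$ satisfies but $\sigma$ does not, one can trace a sequence of single-vertex moves from $\sigma$ toward $\sigma^*$ so that the very first move touching that constraint's support must alter its satisfaction status (by fragility), giving a local improvement and a contradiction with local optimality. Under weak fragility the analogous trace is only guaranteed to change satisfaction when the triggering move is one of the four endpoint moves, so the naive charging loses constraints whose disagreement is ``interior.'' The key lemma I would aim to prove is that, after the random-sample initialization places each constraint's two designated endpoint vertices close to their correct relative positions with high probability, the residual corrections to any constraint can always be realized by a short sequence of the four allowed moves; this reduces the weak-fragile analysis to the fragile one up to the sampling error, which is absorbed into $\epsilon n^k$.

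Finally, I would verify that the resulting algorithm is polynomial for fixed $\epsilon$ and $k$: exhaustive subproblem solution on the sample takes $k^{O(\epsilon^{-c})}$ time, the greedy insertions cost $O(n^{k})$ per vertex, and the restricted local search terminates in $O(n^k/\epsilon)$ iterations because each improving move of the four allowed types strictly decreases an integer cost bounded by $\binom{n}{k}$. Combining these pieces gives the claimed PTAS for all weak-fragile ranking $k$-CSP problems in tournaments; the Betweenness generalization to arities $k\ge 4$ with designated endpoint pairs is an immediate instance.
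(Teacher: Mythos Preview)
Your proposal has a fundamental gap at the very first step. The claim that ``any nontrivial tournament instance whose optimum is not already zero must have $\mathrm{OPT}=\Omega(n^k)$'' is false. Take $k$-\fast{}: fix a ranking $\pi$, make every $k$-set prefer exactly $\pi$, then flip a single constraint. The instance is a fully dense weak-fragile tournament with $\mathrm{OPT}=1$. The same happens for \betTour{}: start from a zero-cost instance and perturb one triple. Weak-fragility constrains how a \emph{satisfied} constraint reacts to local moves; it says nothing that forces a global $\Omega(n^k)$ lower bound on $\mathrm{OPT}$. Consequently, an additive $\epsilon n^k$ guarantee does \emph{not} convert to a multiplicative $(1+O(\epsilon))$ guarantee, and the reduction you build the rest of the argument on collapses. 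This is precisely why the paper splits into two regimes: when $\mathrm{OPT}\ge \epsilon^4 n^k$ the additive algorithm suffices, and the entire technical content of the paper is devoted to the regime $\mathrm{OPT}<\epsilon^4 n^k$.

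Even setting that aside, the local-search-plus-sampling sketch does not match any argument that is known to work for ranking CSPs. The ``four weak-fragile moves'' are defined per constraint, relative to the induced ordering on its $k$ vertices; a single global vertex move is simultaneously an ``endpoint'' move for some constraints and an ``interior'' move for others, so ``restricting local search to the four allowed move types'' is not a well-defined global operation. More importantly, single-vertex local search is already known to be insufficient for a PTAS even for ordinary \fast{} ($k=2$), so there is no reason to expect a charging argument of the kind you outline to close. The paper's actual route is quite different: after sampling and two greedy passes to localize most vertices to within $O(\epsilon n)$ of their optimal positions, it \emph{linearizes} the $k$-ary objective around the current ordering into a weighted \fast{} instance (the ``tangent line'' idea), proves that weak-fragility makes this \fast{} instance a bounded-ratio tournament, cancels opposing arc weights to control the \fast{} optimum, and then invokes the existing \fast{} PTAS as a black box. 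The weak-fragility hypothesis is used not in a charging argument but in two counting lemmas (Lemmas~\ref{lem:fragile} and~\ref{lem:wBarFAST}) that lower-bound $b(\sigma,v,p)+b(\sigma,v,p')$ and $w_{uv}+w_{vu}$; these are where the four specific moves enter.
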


\begin{anonCorollary}
There exists a PTAS for the $k$-\betTour{} problem.
\end{anonCorollary}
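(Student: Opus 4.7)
The plan is to derive the corollary directly from Theorem~\ref{thm:weakFragile} by verifying that $k$-\betTour{} falls into the class it covers, namely weak-fragile ranking $k$-CSP problems in tournaments. The tournament property holds by definition, since $k$-\betTour{} imposes a constraint on every $k$-subset of vertices, so the only thing left to check is that each individual $k$-\betTour{} constraint is weak-fragile.

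To verify this, I would fix an arbitrary $k$-\betTour{} constraint on a vertex set $S$ of size $k$ with designated endpoint pair $\{a,b\} \subseteq S$; by definition this constraint is satisfied precisely when $a$ and $b$ occupy the two extreme positions of the ranking induced on $S$. I would then take any satisfying ranking $\pi = (\pi(1),\pi(2),\ldots,\pi(k))$, which has $\{\pi(1),\pi(k)\} = \{a,b\}$, and apply each of the four weak-fragility operations in turn: (i) swapping positions $1$ and $2$ puts $\pi(2)$ at position $1$; (ii) swapping positions $k-1$ and $k$ is symmetric; (iii) moving $\pi(1)$ to the end puts $\pi(2)$ at position $1$ and shifts the remaining designated vertex $\pi(k)$ to position $k-1$; (iv) moving $\pi(k)$ to the front is symmetric to (iii). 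Since $k \ge 4$, the vertices $\pi(2)$ and $\pi(k-1)$ are distinct from both $a$ and $b$, so in each case at least one extreme position is occupied by a non-designated vertex and the constraint fails. This establishes weak fragility for every $k$-\betTour{} constraint.

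With weak fragility in hand, $k$-\betTour{} qualifies as a weak-fragile ranking $k$-CSP problem in tournaments, and the desired PTAS is immediate from Theorem~\ref{thm:weakFragile}. The whole argument is essentially a definition check, so no real obstacle is expected; the only point worth emphasizing is the role of the hypothesis $k \ge 4$, which is precisely what guarantees that the vertices $\pi(2)$ and $\pi(k-1)$ adjacent to the extreme positions are distinct from the designated endpoints, so that the four local perturbations all expose non-endpoint vertices and weak fragility really does hold across the full range of arities covered by the corollary.
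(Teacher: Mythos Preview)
Your proposal is correct and follows exactly the paper's approach: the corollary is obtained from Theorem~\ref{thm:weakFragile} once one checks that $k$-\betTour{} constraints are weakly fragile, which the paper merely asserts (``It is easy to see that \betTour{} constraints are weakly fragile'') while you spell out the four cases explicitly. One minor quibble: your closing remark overstates the role of $k\ge 4$, since $\pi(2)$ and $\pi(k-1)$ are already distinct from the designated endpoints whenever $k\ge 3$ (they just coincide with each other when $k=3$); but this does not affect the argument, and the paper's definition of $k$-\betTour{} is for $k\ge 4$ anyway.
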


Additionally our algorithms are guaranteed not only to find a \emph{low-cost} ranking but also a ranking that is \emph{close to an optimal ranking} in the sense of Kendall-Tau distance. Karpinski and Schudy \cite{karpinski10exact} recently utilized this extra feature to find an improved parameterized algorithm for \betTour{} with runtime $2^{O(\sqrt{OPT/n})} + n^{O(1)}$.

\begin{theorem}
The PTASs of Theorem \ref{thm:weakFragile} additionally return a set of $2^{\tilde O(1/\epsilon)}$ rankings, one of which is guaranteed to be both cheap (cost at most $(1+O(\epsilon))OPT$) and close to an optimal ranking (Kendall-Tau distance $O\left(\frac{poly\left(\frac{1}{\epsilon}\right)OPT}{n^{k-2}}\right)$).
\end{theorem}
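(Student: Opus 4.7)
The plan is to observe that the PTAS of Theorem~\ref{thm:weakFragile} already generates, en route to its final answer, a small family of candidate rankings indexed by the algorithm's internal ``guesses'' (for instance, the approximate sorted positions of a random sample together with a bounded choice tree), of size $2^{\tilde O(1/\epsilon)}$. That family is what I would output verbatim. By the existing analysis one of these guesses is ``correct'', and the ranking produced from it has cost at most $(1+O(\epsilon))OPT$, so the cheapness part of the claim is already in hand. The only genuinely new ingredient is the Kendall-Tau bound.

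For this I would isolate a structural lemma that converts cost-closeness into positional closeness for weakly fragile ranking $k$-CSPs in tournaments: for any two rankings $\sigma,\tau$,
\[ \mathrm{KT}(\sigma,\tau)\cdot n^{k-2} \;\le\; C_k\cdot \mathrm{poly}(1/\epsilon)\cdot\bigl(\mathrm{cost}(\sigma)+\mathrm{cost}(\tau)\bigr). \]
The intuition is that every disagreement between $\sigma$ and $\tau$ on a pair $\{u,v\}$ must be paid for by many of the $\Theta(n^{k-2})$ constraints containing both $u$ and $v$: weak fragility guarantees that whenever the two end-designated vertices of such a constraint exchange their relative position with another of the $k$ vertices, the constraint's satisfaction status must flip, so along any adjacent-transposition path from $\sigma$ to $\tau$ each such constraint contributes a certifying violation to at least one of $\mathrm{cost}(\sigma)$ or $\mathrm{cost}(\tau)$. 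Summing the contribution over all pairs and all containing constraints, and dividing by the multiplicity with which any constraint is charged, yields the displayed inequality.

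Given the lemma the theorem follows in one line. Let $\sigma^*$ be an optimal ranking and $\sigma$ the cheap candidate in the enumerated family; then
\[ \mathrm{KT}(\sigma,\sigma^*) \;\le\; \frac{C_k\,\mathrm{poly}(1/\epsilon)\bigl(\mathrm{cost}(\sigma)+\mathrm{cost}(\sigma^*)\bigr)}{n^{k-2}} \;=\; O\!\left(\frac{\mathrm{poly}(1/\epsilon)\cdot OPT}{n^{k-2}}\right), \]
which is exactly the distance claimed. Note that while the family has $2^{\tilde O(1/\epsilon)}$ rankings, we do not need to identify which member is the good one; it suffices that \emph{some} member simultaneously meets both guarantees, which is precisely what the combined argument delivers.

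The main obstacle, in my view, is proving the structural lemma under the weaker notion of \emph{weak} fragility rather than full fragility. In the fully fragile case every single-vertex move breaks every previously satisfied constraint it touches, so the natural bubble-sort charging goes through cleanly and no $1/\epsilon$ loss is incurred. Under weak fragility only the four distinguished end-moves are guaranteed to destroy satisfaction, so one must arrange the charging so that each constraint is debited only at those adjacent-transposition steps in which one of its two end-designated vertices actually crosses another of its $k$ members. Accounting carefully for these restricted steps across all $\binom{n}{k}$ constraints, and ensuring no constraint is over-counted, is where I expect the polynomial loss in $1/\epsilon$ to surface.
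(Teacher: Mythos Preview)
Your structural lemma is false, already in the fully fragile case. Take \betTour{} ($k=3$), where a constraint on $\{a,b,c\}$ with designated vertex $b$ is satisfied iff $b$ lies between $a$ and $c$. Reversing any ranking preserves betweenness, so if $\pi^*$ achieves $C(\pi^*)=0$ then its reversal $\tau$ also has $C(\tau)=0$, yet $d(\pi^*,\tau)=\binom{n}{2}$. Your inequality would give $\binom{n}{2}\cdot n \le C_k\cdot\mathrm{poly}(1/\epsilon)\cdot 0$, which is absurd. The bubble-sort charging you sketch breaks precisely here: fragility only guarantees that a \emph{satisfied} constraint becomes unsatisfied after a single-vertex move, not the converse, so a constraint can be satisfied at both endpoints $\sigma$ and $\tau$ while being violated only in the interior of the transposition path, contributing nothing to $C(\sigma)+C(\tau)$. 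Your remark that ``in the fully fragile case \ldots the natural bubble-sort charging goes through cleanly'' is therefore incorrect; the obstacle is not weak versus full fragility but the non-existence of any such two-ranking inequality for the original CSP.

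The paper does not prove anything like your lemma. Instead it ties the Kendall--Tau bound to the \emph{specific} optimal ranking $\pi^*$ whose rounded restriction was correctly guessed on the sample. From that guess the algorithm builds the auxiliary weighted \fast{} instance $\bar w^{\sigma^2}$, which (Lemma~\ref{lem:wBarFAST}) satisfies $\bar w_{uv}+\bar w_{vu}=\Theta(n^{k-2})$. For a \fast{} instance with that lower bound the inequality you wanted \emph{does} hold trivially: $\bar C(\pi)+\bar C(\pi')\ge \Omega(n^{k-2})\, d(\pi,\pi')$ for any $\pi,\pi'$. Lemma~\ref{lem:fastCheap} shows both $\pi^3$ and $\pi^\circledast=Ranking(\restrictO{\pi^*}{U})$ have $\bar w$-cost $O(C(\pi^\circledast))$, giving $d(\pi^3,\pi^\circledast)=O(C(\pi^\circledast)/n^{k-2})$ (Lemma~\ref{lem:Dsmall}). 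The extension from $U$ to $V$ costs at most $n\cdot|V\setminus U|=O(OPT/(\epsilon n^{k-2}))$ extra inversions by Lemma~\ref{lem:Usmall}. The point is that the distance guarantee is to \emph{this particular} $\pi^*$, not to an arbitrary optimum; the claim ``close to an optimal ranking'' is existential, and the auxiliary \fast{} construction, not a universal cost-to-distance lemma, is what pins down which optimum.
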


All our PTASs are randomized but one can easily derandomize them by exhaustively considering every possible random choice.

\medskip

Section \ref{sec:notation} introduces notations and the problems we study. Section \ref{sec:algo} introduces our algorithm and an intuitive sense of why it works. Section \ref{sec:runtime} analyzes the runtime. The remaining sections analyze the cost of the output of our algorithms.

%%%%%%%%
\section{Notation}\label{sec:notation}
First we state some core notation. Throughout this paper let $V$ refer to the set of $n$ objects (vertices) being ranked and $\epsilon>0$ the desired approximation parameter. Our $O(\cdot)$ hides $k$ but not $\epsilon$ or $n$. Our $\tilde O(\cdot)$ additionally hides $(\log (1/\epsilon))^{O(1)}$. A \emph{ranking} is a bijective mapping from a ground set $S \subseteq V$ to $\set{1,2,3,\ldots,|S|}$. An \emph{ordering} is an injection from $S$ into $\R$. We use $\pi$ and $\sigma$ (plus superscripts) to denote rankings and orderings respectively. Let $\pi^*$ denote an optimal ordering and $OPT$ its cost. We let $\binom{n}{k}$ (for example) denote the standard binomial coefficient and $\binom{V}{k}$ denote the set of subsets of set $V$ of size $k$.

For any ordering $\sigma$ let $Ranking(\sigma)$ denote the ranking naturally associated with $\sigma$. To help prevent ties we relabel the vertices so that $V = \set{1,2,3,\ldots,|V|}$.
We will often choose to place $u$ in one of $O(1/\epsilon)$ positions $\posns(u) = \set{j \epsilon n + u/(n+1), 0 \le j \le 1/\epsilon}$ (the $u/(n+1)$ term breaks ties).
We say that an ordering is a \emph{bucketed ordering} if $\sigma(u) \in \posns(u)$ for all $u$.
Let $Round(\pi)$ denote the bucketed ordering corresponding to $\pi$ (rounding down), i.e.\ $Round(\pi)(u)$ equals $\pi(u)$ rounded down to the nearest multiple of $\epsilon n$, plus $u/(n+1)$.

Let $v \unaryOrdering p$ denote the ordering over $\set{v}$ which maps vertex $v$ to position $p \in \R$.
For set $Q$ of vertices and ordering $\sigma$ with domain including $Q$ let $\restrictO{\sigma}{Q}$ denote the ordering over $Q$ which maps $u \in Q$ to $\sigma(u)$, i.e.\ the restriction of $\sigma$ to $Q$. For orderings $\sigma^1$ and $\sigma^2$ with disjoint domains let  $\sigma^1\bp \sigma^2$ denote the natural combined ordering over $Domain(\sigma^1) \cup Domain(\sigma^2)$. For example of our notations, $\restrictO{\sigma}{Q} \bp v \unaryOrdering p$ denotes the ordering over $Q \cup \set{v}$ that maps $v$ to $p$ and $u \in Q$ to $\sigma(u)$.

A ranking $k$-CSP consists of a ground set $V$ of \emph{vertices}, an arity $k \ge 2$, and a \emph{constraint system} $c$, where $c$ is a function from rankings of $k$ vertices to $\set{0,1}$.\footnote{Our results transparently generalize to the $[0,1]$ case as well, but the 0/1 case allows simpler terminology.}
Note that $c$ depends on the names of the vertices in the domain of its argument. In particular if $u_1$, $u_2$ and $u_3$ are vertices then $c(u_1 \unaryOrdering 1 \bp u_2 \unaryOrdering 2)$,  $c(u_2 \unaryOrdering 1 \bp u_1 \unaryOrdering 2)$ and $c(u_2 \unaryOrdering 1 \bp u_3 \unaryOrdering 2)$ are all different (although $c(u_1 \unaryOrdering 1 \bp u_2 \unaryOrdering 2)$ and $c(u_2 \unaryOrdering 2 \bp u_1 \unaryOrdering 1)$ are the same).
We say that a subset $S \subset V$ of size $k$ is \emph{satisfied} in ordering $\sigma$ of $S$ if $c(Ranking(\sigma)) = 0$. For brevity we henceforth abuse notation and omit the ``$Ranking$'' and  write simply $c(\sigma)$. The objective of a ranking CSP is to find an ordering $\sigma$ (w.l.o.g.\ a ranking) minimizing the number of unsatisfied constraints, which we denote by $C^c(\sigma) = \sum_{S \in \binom{Domain(\sigma)}{k}} c(\restrictO{\sigma}{S})$. We will frequently omit the superscript $c$, in which case it should be understood to be the constraint system of the overall problem we are trying to solve.

Abusing notation we sometimes refer to $S \subseteq V$ as a \emph{constraint}, when we really are referring to $c$ applied to orderings of $S$.
A constraint $S$ is \emph{fragile} if no two orderings that satisfy it differ by the position of a single vertex. In other words constraint $S$ is fragile if $c(\restrictO{\pi}{S}) + c(\restrictO{\pi'}{S}) \ge 1$ for all rankings $\pi$ and $\pi'$ over $S$ that differ by a single vertex move, i.e.\ $\pi' = Ranking(v \unaryOrdering p \bp \restrictO{\pi}{S \sm \set{v}})$ for some $v \in S$ and $p \in (\Z + 1/2)$. An alternate definition is that a satisfied fragile constraint becomes unsatisfied whenever a single vertex is moved, which is why it is called ``fragile.'' Fragility is illustrated in Figure \ref{fig:fragile}.

\begin{figure}[t]
\begin{center}
\includegraphics[width=0.7\textwidth]{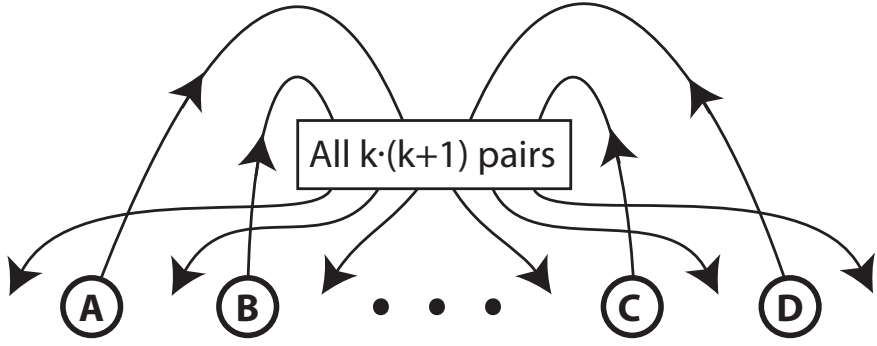}
\end{center}
\caption{An illustration of fragility. For a constraint to be fragile all the illustrated single vertex moves must make any satisfied constraint unsatisfied.}
\label{fig:fragile}
\end{figure}

\begin{figure}[t]
\vspace{1.5em} %so the two figures don't run together
\begin{center}
\includegraphics[width=0.7\textwidth]{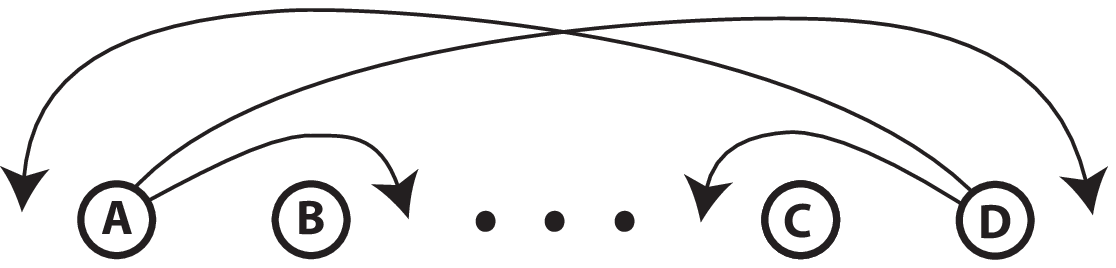}
\end{center}
\caption{An illustration of weak fragility. For a constraint to be weak fragile all the illustrated single vertex moves must make any satisfied constraint unsatisfied.}
\label{fig:weakFragile}
\end{figure}

A constraint $S$ is \emph{weakly fragile} if $c(\restrictO{\pi}{S}) + c(\restrictO{\pi'}{S}) \ge 1$ for all rankings $\pi$ and $\pi'$ that differ by a swap of the first two vertices, a swap of the last two, or a cyclic shift of a single vertex. In other words $\pi' = Ranking(v \unaryOrdering p \bp \restrictO{\pi}{S \sm \set{v}})$ for some $v \in S$ and $p \in \R$ with $(\pi(v),p) \in \set{(1,2 + \frac{1}{2}),(1,k+\frac{1}{2}),(k,k-\frac{3}{2}),(k,\frac{1}{2})}$. Observe that this is equivalent to ordinary fragility for $k \le 3$. Weak fragility is illustrated in Figure \ref{fig:weakFragile}.

Our techniques handle ranking CSPs that are \emph{fully dense} with weakly fragile constraints, i.e.\ every set $S$ of $k$ vertices corresponds to a weakly fragile constraint. Fully dense instances are also known as tournaments.

Let $b^c(\sigma, v, p)$ denote the cost of the constraints involving vertex $v$ in ordering $\restrictO{\sigma}{Domain(\sigma) \sm \set{v}} \bp v \unaryOrdering p$ formed by moving $v$ to position $p$ in ordering $\sigma$. Formally $b^c(\sigma, v, p) = \sum_{Q : \cdots} c(\restrictO{\sigma}{Q} \bp v \unaryOrdering p)$, where the sum is over sets $Q \subseteq Domain(\sigma) \sm \set{v}$ of size $k-1$. Note that this definition is valid regardless of whether or not $v$ is in $Domain(\sigma)$. The only requirement is that the range of $\sigma$ excluding $\sigma(v)$ must not contain $p$. This ensures that the argument to $c(\cdot)$ is an ordering (injective). We will usually omit the superscript $c$ (as with $C$).

We call a non-negative weight function $\set{w_{uv}}_{u,v \in U}$ over the edges of the complete graph induced by some vertex set $U$ a \emph{feedback arc set (FAS) instance}. We can express the feedback arc set problem in our framework by the correspondence $c(u \unaryOrdering x \bp v \unaryOrdering y)=\piecewise{w_{vu} & \text{if } x< y\\ w_{uv} & \text{otherwise}}$.
Abusing notation slightly we also write $C^w(\sigma)$ for $C^c(\sigma)$ with the above $c$.
More concretely $C^w(\sigma) = \sum_{u,v:\sigma(u)>\sigma(v)} w_{uv}$.
Similarly we write $b^{w}(\sigma, v, p) = \sum_{u \ne v} \piecewise{ w_{uv} & \text{if }\sigma(u) > p \\ w_{vu} & \text{if }\sigma(u) < p}$.
If a FAS instance satisfies $\alpha \le w_{uv}+w_{vu} \le \beta$ for all $u,v$ and some $\alpha,\beta>0$ we call it a (weighted) feedback arc set tournament (\fast{}) instance.
It is easy to see that \fast{} captures all possible fragile constraints with $k=2$.
We generalize to $k$-FAST as follows: a $k$-FAST constraint over $S$ is satisfied by one particular ranking of $S$ and no others. Clearly $k$-FAST constraints are fragile.

We generalize \betTour{} to $k \ge 4$ as follows. Each constraint $S$ designates two vertices $\set{u,v}$, which must be the first and last positions, i.e.\ if $\pi$ is the ranking of the vertices in $S$ then $c(\pi) = \one{ \set{\pi(u),\pi(v)} \ne \set{1,k}}$. It is easy to see that \betTour{} constraints are weakly fragile.

We use the following two results from the literature.

\begin{theorem}[\cite{mathieu09fast}] \label{thm:fastPTAS}
Let $w$ be a FAS instance satisfying $\alpha \le w_{uv} + w_{vu} \le \beta$ for $\alpha,\beta>0$ and $\beta/\alpha=O(1)$. 
There is a PTAS for the problem of finding a ranking $\pi$ minimizing $C^w(\pi)$ with runtime $n^{O(1)} 2^{\tilde O(1/\epsilon^6)}$.
\end{theorem}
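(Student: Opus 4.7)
The plan is to prove the PTAS in two stages. Stage one produces, with high probability, a ranking within \emph{additive} error $\epsilon \alpha n^2$ of $OPT$; stage two boosts this to a multiplicative $(1+\epsilon)$-approximation via recursive decomposition.

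For the additive stage, I would draw a uniformly random sample $R \subseteq V$ of size $s = \tilde O(1/\epsilon^c)$ for a suitable constant $c$. Since $s! = 2^{\tilde O(s)}$, the algorithm can exhaustively enumerate every ranking $\pi_R$ of $R$ in time $2^{\tilde O(1/\epsilon^c)}$. For each candidate $\pi_R$, assign every remaining vertex $v \in V \setminus R$ to the gap among the $|R|+1$ gaps of $\pi_R$ that minimizes a scaled version of $b^w$ restricted to edges incident to $R$, and then order vertices within each gap arbitrarily. Output the cheapest resulting ranking. Chernoff/Hoeffding concentration on each vertex's score, union-bounded over the $n$ vertices and the $s!$ candidate $\pi_R$, shows that with high probability the extension of $\restrictO{\pi^*}{R}$ has cost $OPT + O(\epsilon \alpha n^2)$. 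The FAST hypothesis $\alpha \le w_{uv}+w_{vu} \le \beta = O(\alpha)$ is used twice: to bound the per-vertex score variance (controlling the Chernoff deviation) and to bound the within-gap loss by $\alpha (n/|R|)^2$ per gap, which sums to $\alpha n^2/|R| \le \epsilon \alpha n^2$.

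To amplify the additive bound into a $(1+\epsilon)$-multiplicative bound, recurse inside each of the $O(1/\epsilon)$ buckets produced at the top level, halting when sub-instances become small enough for exhaustive search (or an exact DP). Additive errors at successive levels shrink geometrically with the instance size and sum to $O(\epsilon \alpha n^2)$ overall, which is $O(\epsilon)\cdot OPT$ provided $OPT = \Omega(\alpha n^2)$; this regime can be arranged via a preliminary constant-factor approximation used to rescale $\epsilon$, while the small-$OPT$ regime is handled by the recursion quickly reaching exactly-solvable base cases. The main obstacle is driving the sample size down to $\tilde O(1/\epsilon^6)$: a naive Chernoff bound on each vertex's placement wastes factors of $\epsilon$ in the variance and would blow the exponent up. One must instead exploit the weighted-tournament structure by bounding the relevant variance by the cost gap between $v$'s best and next-best placement under $\pi^*$, and charge each vertex's misplacement loss against the share of $OPT$ contributed by the pairs witnessing that gap, so that the recursion's cumulative error remains comfortably inside $\epsilon \cdot OPT$.
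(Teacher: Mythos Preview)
This theorem is not proved in the paper: it is quoted verbatim from \cite{mathieu09fast} and used as a black box (see Section~\ref{sec:notation}, where it is stated under ``We use the following two results from the literature''). Consequently there is no proof in the paper to compare your proposal against, and no proof is expected of you here.

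For what it is worth, your sketch is broadly in the spirit of the Mathieu--Schudy argument (random sample, enumerate orderings of the sample, insert remaining vertices greedily into buckets, recurse on buckets), but several steps as written would not go through. The additive stage does not by itself give error $O(\epsilon\alpha n^2)$ just from Chernoff plus a within-gap bound: the within-gap ordering is \emph{not} arbitrary in the actual algorithm but must itself be handled recursively, and the charging of misplacement cost against $OPT$ is the heart of the analysis, not an afterthought. Your ``amplification'' paragraph also conflates two regimes: the recursion does not need a separate small-$OPT$ base case via exact DP, and the claim that ``$OPT=\Omega(\alpha n^2)$ can be arranged via a preliminary constant-factor approximation'' is circular (a constant-factor approximation is already stronger than what you are trying to build at that point). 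Finally, the specific exponent $1/\epsilon^6$ in the runtime comes from a delicate balance of sample size, recursion depth, and per-level error that your sketch does not pin down. None of this matters for the present paper, however, since Theorem~\ref{thm:fastPTAS} is simply imported.
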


\begin{theorem}[e.g.\ \cite{AKK95,MS08}] \label{thm:addApproxCSP}
For any $k$-ary MIN-CSP and $\delta > 0$ there is an algorithm that produces a solution with cost at most $\delta n^k$ more than optimal. Its runtime is $n^{O(1)} 2^{O(1/\delta^2)}$.
\end{theorem}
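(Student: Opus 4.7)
The plan is to use random exhaustive sampling in the style of Arora--Karger--Karpinski. Let $A$ be the constant-size CSP alphabet and write the objective as $C(\alpha) = \sum_{S \in \binom{V}{k}} c_S(\alpha|_S)$ with each $c_S \colon A^k \to \{0,1\}$. Pick a uniformly random $T \subseteq V$ of size $s = \Theta(1/\delta^2)$ and enumerate all $|A|^s = 2^{O(1/\delta^2)}$ assignments $\tau \colon T \to A$. For each $\tau$, extend to an assignment $\alpha_\tau$ by setting $\alpha_\tau(v) = \tau(v)$ for $v \in T$ and, for each $v \in V \setminus T$,
\begin{equation*}
\alpha_\tau(v) \;=\; \argmin_{a \in A}\, \hat b_\tau(v, a), \qquad \hat b_\tau(v, a) \;=\; \frac{\binom{n-1}{k-1}}{\binom{s}{k-1}} \sum_{Q \in \binom{T}{k-1}} c_{Q \cup \{v\}}(\tau|_Q,\; v \mapsto a),
\end{equation*}
the natural unbiased empirical estimate of the true marginal $b(\alpha^*, v, a) := \sum_{Q \in \binom{V \setminus \{v\}}{k-1}} c_{Q \cup \{v\}}(\alpha^*|_Q,\; v \mapsto a)$ of placing $v$ at $a$ when the other variables follow some optimal $\alpha^*$. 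Output the $\alpha_\tau$ of smallest true cost, which is computable exactly in $n^{O(1)}$ time.

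For correctness, fix an optimal $\alpha^*$ and concentrate on the unique $\tau$ matching $\alpha^*$ on $T$. Since $T$ is a uniform random sample, Hoeffding's inequality gives $|\hat b_\tau(v, a) - b(\alpha^*, v, a)| = O(\delta n^{k-1})$ for each fixed $(v, a)$ with high probability. Combined with the greedy rule $\hat b_\tau(v, \alpha_\tau(v)) \le \hat b_\tau(v, \alpha^*(v))$, this yields $b(\alpha^*, v, \alpha_\tau(v)) \le b(\alpha^*, v, \alpha^*(v)) + O(\delta n^{k-1})$ per vertex. Using the identity $\sum_v b(\alpha, v, \alpha(v)) = k \cdot C(\alpha)$ and summing gives $\sum_v b(\alpha^*, v, \alpha_\tau(v)) \le k \cdot OPT + O(\delta n^k)$. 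Converting this ``cross'' bound into $C(\alpha_\tau) \le OPT + O(\delta n^k)$ requires a further step because $b(\alpha^*, v, \alpha_\tau(v))$ mixes the optimum and the greedy assignments; this is handled by bounding the residual $\sum_v [b(\alpha_\tau, v, \alpha_\tau(v)) - b(\alpha^*, v, \alpha_\tau(v))]$, which is itself a dense-CSP cost that can be made $O(\delta n^k)$ either by a second sampling-and-greedy pass or by a careful telescoping argument that controls the contribution of constraints supported on already-swapped vertices.

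The main obstacle is producing the $2^{O(1/\delta^2)}$ factor in the runtime rather than the weaker $n^{O(1/\delta^2)}$ bound of Arora--Karger--Karpinski. A naive union bound over the $n|A|$ pairs $(v,a)$ forces $s = \Omega(\log n / \delta^2)$ for simultaneous concentration, which is the source of the weaker runtime. Sharpening this relies on the observation that each individual estimator failure contributes only $O(n^{k-1})$ to the final cost, so an \emph{averaged} (rather than uniform) concentration of the estimators suffices; this weaker guarantee holds already at $s = O(1/\delta^2)$, as exploited by Mathieu and Schudy \cite{MS08}, and the overall success probability is boosted to a constant by taking the best outcome over $O(1)$ independent trials of $T$, verified by exact cost evaluation.
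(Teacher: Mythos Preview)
The paper does not prove this theorem; it is quoted as a known result from the literature (the bracketed citations in the theorem header are the entirety of the paper's justification) and used only as a black box to obtain Corollary~\ref{thm:addApproxRank}. So there is nothing on the paper's side to compare against.

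Your sketch, however, has a real gap at exactly the step you flag. Take $k=2$, $A=\{0,1\}$, $c_{uv}(a,b)=\one{a=b=1}$: then $\alpha^*\equiv 0$ gives $OPT=0$, and for the correct guess $\tau=\alpha^*|_T$ every estimate $\hat b_\tau(v,a)$ is identically zero, so your $\argmin$ rule is free to send all of $V\setminus T$ to $1$, incurring cost $\binom{n-s}{2}=\Theta(n^2)$. Your cross-bound calculation detects nothing here---each $b(\alpha^*,v,\alpha_\tau(v))-b(\alpha^*,v,\alpha^*(v))$ vanishes---which is precisely why the residual cannot be waved away. A ``second sampling-and-greedy pass'' still anchored at $\alpha^*$ faces the same degeneracy, and a telescoping argument would need each greedy decision to be made against the \emph{current hybrid} assignment rather than against $\alpha^*$, which your estimator $\hat b_\tau$ does not provide. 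The actual proofs in the cited references close this gap with additional machinery (a linear-programming step that globally couples the assignments in \cite{AKK95}; a more careful sequential analysis in \cite{MS08}) that your sketch does not supply.
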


Theorem \ref{thm:addApproxCSP} entails the following corollary.

\begin{corollary}\label{thm:addApproxRank}
For any $\delta > 0$ and constraint system $c$ there is an algorithm \addApprox{} for the problem of finding a ranking $\pi$ with $C(\pi) \le C(\pi^*) + \delta n^k$, where $\pi^*$ is an optimal ranking. Its runtime is $n^{O(1)} 2^{\tilde O(1/\delta^2)}$.
\end{corollary}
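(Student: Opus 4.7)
}

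The plan is to reduce the ranking $k$-CSP to an ordinary $k$-ary MIN-CSP over a constant-sized domain by bucketing positions, apply Theorem \ref{thm:addApproxCSP}, and show that the discretization loses at most $O(\delta n^k)$ in the objective. Set $\delta' = \delta/C$ for a constant $C$ to be fixed later, let $m = \lceil 1/\delta' \rceil$, and partition the positions $\{1,2,\ldots,n\}$ into $m$ consecutive blocks (\emph{buckets}) of size at most $\lceil \delta' n \rceil$.

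First I would define an auxiliary ordinary MIN-CSP on $n$ variables $x_v \in \{1,\ldots,m\}$ (one per vertex of $V$) with arity $k$: for each $S \in \binom{V}{k}$, the cost of the constraint on $S$ under an assignment $x$ is $c$ evaluated on the ranking of $S$ obtained by sorting the vertices of $S$ by their bucket indices $x_v$ and breaking ties lexicographically by vertex index. Given any bucket assignment $x$, this construction produces a canonical ranking $\pi_x$ of all of $V$ (sort by bucket, break ties by vertex index) whose cost $C(\pi_x)$ equals exactly the CSP cost of $x$, because each constraint on $S$ sees the same induced relative order in $\pi_x$ as in the tie-broken sort of $S$ alone. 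This handles the soundness direction.

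For completeness, for any ranking $\pi$ (in particular $\pi^*$) define $x_v = \lceil \pi(v)/\lceil \delta' n \rceil \rceil$. The CSP cost of $x$ agrees with $C(\pi)$ on every $k$-subset $S$ whose vertices occupy $k$ distinct buckets, since on such subsets the induced relative order from $\pi$ and from the tie-broken sort coincide. The $k$-subsets with two or more vertices sharing a bucket number at most $m \binom{\lceil \delta' n \rceil}{2} \binom{n}{k-2} = O(\delta' n^k)$, and each contributes cost at most $1$ to either objective, so the CSP cost of $x$ is at most $C(\pi) + O(\delta' n^k)$. Applied to $\pi^*$ this bounds the optimal CSP value by $OPT + O(\delta' n^k)$.

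Now invoke Theorem \ref{thm:addApproxCSP} with accuracy parameter $\delta'$ on this auxiliary $k$-ary MIN-CSP over domain of size $m = O(1/\delta)$: in time $n^{O(1)} 2^{\tilde O(1/\delta'^2)} = n^{O(1)} 2^{\tilde O(1/\delta^2)}$ (the $\log m$ factors are absorbed by the $\tilde O$) it returns an assignment $\hat x$ with CSP cost at most $OPT + O(\delta' n^k) + \delta' n^k$. Converting $\hat x$ back to the ranking $\hat\pi = \pi_{\hat x}$ via the soundness argument yields $C(\hat\pi) \le OPT + O(\delta' n^k)$, which becomes $\le OPT + \delta n^k$ for a suitable choice of $C$. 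The main thing to verify carefully is the completeness step — that bucketing an optimal ranking only damages $O(\delta' n^k)$ constraints — and that the cited additive-approximation MIN-CSP result of Theorem \ref{thm:addApproxCSP} tolerates a non-binary but constant-sized domain (which is standard for the AKK95/MS08-style algorithms, with only poly-logarithmic dependence of the runtime on the domain size absorbed into the $\tilde O$).
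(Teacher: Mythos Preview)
Your proposal is correct and is exactly the natural reduction the paper has in mind. The paper itself gives no proof beyond the sentence ``Theorem~\ref{thm:addApproxCSP} entails the following corollary,'' so your bucketing argument (discretize positions into $O(1/\delta)$ buckets, encode as an ordinary $k$-ary MIN-CSP over that domain, bound the $O(\delta' n^k)$ constraints damaged by ties, and apply Theorem~\ref{thm:addApproxCSP}) is precisely the standard way to fill in that gap; the appearance of the $\tilde O$ in the corollary's runtime versus the plain $O$ in Theorem~\ref{thm:addApproxCSP} confirms that the paper, like you, is absorbing the polylog-in-$(1/\delta)$ domain-size overhead into the tilde.
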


%%%%%%%%%%%%%%
\section{Intuition and algorithm}\label{sec:algo}

We are in need for some new techniques different in nature from the techniques used for weighted FAST \cite{mathieu09fast}.

Our first idea in this direction is somehow analogous to the approximation of a differentiable function by a tangent line. Given a ranking $\pi$ and any ranking CSP, the change in cost from switching to a similar ranking $\pi'$ can be well approximated by the change in cost of a particular weighted feedback arc set problem (see proof of Lemma \ref{lem:piThree}). Furthermore if the ranking CSP is fragile and fully dense the corresponding feedback arc set instance is a (weighted) tournament (Lemma \ref{lem:wBarFAST}). So \emph{if} we somehow had access to a ranking similar to the optimum ranking $\pi^*$ we could create this FAST instance and run the existing PTAS for weighted \fast{} \cite{mathieu09fast} to get a good ranking.

We do not have access to $\pi^*$ but we can use a variant of the fragile techniques of \cite{KS09} to get close. We pick a random sample of vertices and guess their location in the optimal ranking to within (an additive) $\epsilon n$. We then create an ordering $\sigma^1$ greedily from the random sample.
We show that this ordering is close to $\pi^*$, in that $|\pi^*(v) - \sigma^1(v)| = O(\epsilon n)$ for all but $O(\epsilon n)$ of the vertices (Lemma \ref{lem:piOne}).

We then do a second greedy step (relative to $\sigma^1$), creating $\sigma^2$. We then identify a set $U$ of \emph{unambiguous} vertices for which we know $|\pi^*(v) - \sigma^2(v)| = O(\epsilon n)$ (Lemma \ref{lem:piTwo}). We temporarily set aside the $O(OPT/(\epsilon n^{k-1}))$ (Lemma \ref{lem:Usmall}) remaining vertices. These two greedy steps are similar in spirit to previous work on ordinary (non-ranking) everywhere-dense fragile CSPs \cite{KS09} but substantially more involved.

We then use $\sigma^2$ to create a (weighted) \fast{} instance $w$ that locally represents the CSP. It would not be so difficult to show that $w$ is a close enough representation for an additive approximation, but we want a multiplicative $1+\epsilon$ approximation. Showing this requires overcoming two obstacles that are our main technical contribution.

Firstly the error in $\sigma^2$ causes the weights of $w$ to have significant error (Lemma \ref{lem:w}) even in the extreme case of $OPT = 0$. At first glance even an exact solution to this \fast{} problem would seem insufficient, for how can solving a problem similar to the desired one lead to a precisely correct solution? Fortunately \fast{} is somewhat special. It is easy to see that a zero-cost instance of \fast{} cannot be modified to change its optimal ranking without modifying an arc weight by at least 1/2. We extend this idea to cases where $OPT$ is small but non-zero (Lemma \ref{lem:piThree}).

The second obstacle is that the incorrect weights in FAST instance $w$ may increase the optimum cost of $w$ far above $OPT$, leaving the PTAS for FAST free to return a poor ranking. To remedy this we create a new FAST instance $\bar w$ by canceling weight on opposing arcs, i.e.\ reducing $w_{uv}$ and $w_{vu}$ by the same amount. The resulting simplified instance $\bar w$ clearly has the same optimum ranking as $w$ but a smaller optimum value. The PTAS for FAST requires that the ratio of the maximum and the minimum of $w_{uv} + w_{vu}$ must be bounded above by a constant so we limit the amount of cancellation to ensure this (Lemma \ref{lem:wBarFAST}). It turns out that this cancellation trick is sufficient to ensure that the PTAS for FAST does not introduce too much error (Lemma \ref{lem:fastCheap}).

Finally we greedily insert the relatively few ambiguous vertices into the ranking output by the PTAS for \fast{} \cite{mathieu09fast} (Appendix \ref{sec:analysis}).

\bigskip

%%%%%%%%%%%%%%%%%%%%%%%%%%%%%%
\begin{algorithm}[btp]
Input: Vertex set $V$, $|V|=n$, arity $k$, system $c$ of fully dense arity $k$ constraints, and approximation parameter $\epsilon>0$.
\begin{algorithmic}[1]
\STATE Run $\addApprox(\epsilon^5 n^k)$ and return the result if its cost is at least $\epsilon^4 n^k$
\STATE Pick sets $T_1,\ldots,T_t$ uniformly at random with replacement from $\binom{V}{k-1}$, where $t=\frac{14 \ln (40/\epsilon)}{\binom{k}{2} \epsilon}$.  Guess (by exhaustion) bucketed ordering $\sigma^0$, which is the restriction of $Round(\pi^*)$ to the sampled vertices $\bigcup_i T_i$, where $\pi^*$ is an optimal ranking.
\STATE Compute bucketed ordering $\sigma^1$ greedily with respect to the random samples and $\sigma^0$: \\
$\sigma^1(u) = \argmin_{p \in \posns(u)} \hat b(u, p)$ where $\hat b(u,p) = \frac{\binom{n}{k-1}}{t} \sum_{i : u \not\in T_i} c(\restrictO{\sigma^0}{T_i} \bp v \unaryOrdering p)$.
\STATE For each vertex $v$: If $b(\sigma^1, v, p) \le 13 k^4 3^{k-1} \epsilon \binom{n-1}{k-1}$ for some $p \in \posns(v)$ then call $v$ \emph{unambiguous} and set $\sigma^2(v)$ to the corresponding $p$ (pick any if multiple $p$ satisfy). Let $U$ denote the set of unambiguous vertices, which is the domain of bucketed ordering $\sigma^2$.
\STATE Compute feedback arc set instance over unambiguous vertices $U$ with weights $\bar w_{uv}^{\sigma^2}$ (see text). Solve it using the \fast{} PTAS \cite{mathieu09fast}. Do single vertex moves until local optimality (with respect to the \fast{} objective function), yielding ranking $\pi^3$ of $U$.
\STATE Create ordering $\sigma^4$ over $V$ defined by $\sigma^4(u) = \piecewise{\pi^3(u) & \text{if }u \in U \\ \argmin_{p=v/(n+1) + j, 0 \le j \le n} b(\pi^3, u, p) & \text{otherwise}}$. In other words insert each vertex $v \in V \sm U$ into $\pi^3(v)$ greedily. 
\STATE Return $\pi^4 = Ranking(\sigma^4)$.
\end{algorithmic}
\caption{A $1+O(\epsilon)$-approximation for weak fragile rank $k$-CSPs in tournaments.\label{alg:main}}
\end{algorithm}
%%%%%%%%%%%%%%%%%%%%%%%%%%%%

For any ordering $\sigma$ with domain $U$ we define a weighted feedback arc set instance $\set{w_{uv}^{\sigma}}_{uv}$ as follows. Let $w_{uv}^{\sigma}$ equal the number of the constraints $\set{u,v} \subseteq S \subseteq U$ with $c(\sigma')=1$ where (1) $\sigma' = (\restrictO{\sigma}{S \sm \set{v}} \bp v \unaryOrdering p)$, (2) $p = \sigma(u)-\delta$ if $\sigma(v) > \sigma(u)$ and $p = \sigma(v)$ otherwise, and (3) $\delta>0$ is sufficiently small to put $p$ adjacent to $\sigma(u)$. In other words if $v$ is after $u$ in $\sigma$ it is placed immediately before $v$ in $\sigma'$. Observe that $0 \le w_{uv} \le \binom{|U|-2}{k-2}$. For any two orderings $\sigma$ and $\sigma'$ we use the abbreviation $C^{\sigma'}(\sigma) = C^{w^{\sigma'}}(\sigma)$.
The following Lemma follows easily from the definitions.

\begin{lemma}\label{lem:objEq}
For any ordering $\sigma$ we have (1) $C^\sigma(\sigma) = \binom{k}{2}C(\sigma)$ and (2) $b^{w^\sigma}(\sigma, v, \sigma(v)) = (k-1) \cdot b(\sigma, v, \sigma(v))$ for all $v$.
\end{lemma}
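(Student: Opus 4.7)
The plan is pure bookkeeping: unwind the definitions of $w^\sigma_{uv}$, $C^\sigma(\sigma)$, and $b^{w^\sigma}(\sigma, v, \sigma(v))$ and use a simple switch in the order of summation. The only subtlety worth isolating up front is a small simplification of the definition of $w^\sigma_{uv}$ in the one regime that matters: the FAS objective $C^w(\sigma)=\sum_{u,v:\sigma(u)>\sigma(v)} w_{uv}$ reads off $w^\sigma_{uv}$ only when $\sigma(v)<\sigma(u)$, and in that regime the defining formula puts $p=\sigma(v)$, so the reinserted vertex $v$ ends up exactly where it started and $\sigma'$ coincides with $\restrictO{\sigma}{S}$. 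Thus, whenever $\sigma(u)>\sigma(v)$, we can rewrite
\[
w^\sigma_{uv} \;=\; \#\bigl\{\,S\in\tbinom{U}{k}\,:\,\{u,v\}\subseteq S,\ c(\restrictO{\sigma}{S})=1\,\bigr\}.
\]

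For part (1), I would substitute this identity into $C^\sigma(\sigma)=\sum_{u,v:\sigma(u)>\sigma(v)} w^\sigma_{uv}$ and swap the order of summation to group by $S$. Each $k$-set $S$ with $c(\restrictO{\sigma}{S})=1$ is counted once for every ordered pair $(u,v)\in S\times S$ with $\sigma(u)>\sigma(v)$, and there are exactly $\binom{k}{2}$ such pairs, yielding $C^\sigma(\sigma)=\binom{k}{2}\sum_S c(\restrictO{\sigma}{S})=\binom{k}{2}C(\sigma)$.

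For part (2), I would expand $b^{w^\sigma}(\sigma,v,\sigma(v))=\sum_{u:\sigma(u)>\sigma(v)} w^\sigma_{uv}+\sum_{u:\sigma(u)<\sigma(v)} w^\sigma_{vu}$ and check that the simplification above applies to both pieces. The first sum is already in the regime $\sigma(v)<\sigma(u)$. In the second sum we read off $w^\sigma_{vu}$, so in the defining formula we instantiate with the first subscript $v$ and second subscript $u$; the condition ``$\sigma(u)>\sigma(v)$'' fails (we have $\sigma(u)<\sigma(v)$), so again the \emph{else} branch fires and $\sigma'=\restrictO{\sigma}{S}$. Consequently each of the two summands counts, for the given $u$, the number of $k$-subsets $S\supseteq\{u,v\}$ with $c(\restrictO{\sigma}{S})=1$. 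Summing over $u\neq v$ and reordering by $S$, every $S$ with $v\in S$ and $c(\restrictO{\sigma}{S})=1$ is counted once per $u\in S\setminus\{v\}$, i.e.\ $k-1$ times, giving $(k-1)\sum_{S\ni v} c(\restrictO{\sigma}{S})=(k-1)\,b(\sigma,v,\sigma(v))$.

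Since the only computation is counting ordered pairs inside $k$-subsets, there is no real obstacle here; the only place one can slip is forgetting the asymmetry in the definition of $w^\sigma$ and mis-identifying which branch applies in the second sum of part (2). Verifying that both sums sit in the \emph{else} branch (so that no actual vertex moves occur in $\sigma'$) is the one spot that deserves explicit mention in the write-up.
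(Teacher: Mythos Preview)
Your proof is correct and follows essentially the same approach as the paper: both observe that the only weights $w^\sigma_{xy}$ that actually appear in $C^\sigma(\sigma)$ and $b^{w^\sigma}(\sigma,v,\sigma(v))$ are those with $\sigma(x)>\sigma(y)$, in which case the definition of $w^\sigma$ collapses to $\sigma'=\restrictO{\sigma}{S}$ and the weight simply counts unsatisfied constraints through the pair. Your write-up is more explicit than the paper's two-sentence proof, especially in carefully tracking which branch of the definition fires for the $w^\sigma_{vu}$ terms in part~(2), but the idea is identical.
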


\begin{proof}
Observe that all $w_{uv}$ that contribute to $C^\sigma(\sigma) $ or $b^{w^\sigma}(\sigma, v, \sigma(v))$ satisfy $\sigma(u)>\sigma(v)$ and hence such $w_{uv}$ are equal to the number of constraints containing $u$ and $v$ that are unsatisfied in $\sigma$. The $\binom{k}{2}$ and $k-1$ factors appear because constraints are counted multiple times.
\end{proof}

For any ordering $\sigma$ we define another weighted feedback arc set instance \\
$\bar w_{uv}^{\sigma} = w_{uv}^\sigma - \min(\frac{1}{10 \cdot 3^{k-1}} \binom{|U|-2}{k-2}, w_{uv}^\sigma, w_{vu}^\sigma)$,
 where $U$ is the domain of $\sigma$. For any orderings $\sigma$ and $\sigma'$ let $\bar C^\sigma(\sigma') = C^{\bar w^\sigma}(\sigma')$. Observe that the feedback arc set problems induced by $w$ and $\bar w$ have the same optimal rankings but $\bar w$ has a smaller objective value and hence they are not equivalent for approximation purposes. In other words $C^\sigma(\pi') - C^\sigma(\pi^\circ) = \bar C^\sigma(\pi') - \bar C^\sigma(\pi^\circ)$ for all rankings $\pi'$ and $\pi^\circ$.

For any orderings $\sigma$ and $\sigma'$ with domain $U$, we say that $\set{u,v} \subseteq U$ is a \emph{$\sigma/\sigma'$-inversion} if $\sigma(u) - \sigma(v)$ and $\sigma'(u)-\sigma'(v)$ have different signs. Let $d(\sigma,\sigma')$ denote the number of $\sigma/\sigma'$-inversions (a.k.a.\ Kendall Tau distance).
We say that $v$ does a \emph{left to right $(\sigma,p,\sigma',p')$-crossing} if $\sigma(v) < p$ and $\sigma'(v) >p'$. 
We say that $v$ does a \emph{right to left $(\sigma,p/\sigma',p')$-crossing} if $\sigma(v) > p$ and $\sigma'(v) < p'$. We say that $v$ does a \emph{$(\sigma,p,\sigma',p')$-crossing} if $v$ does a crossing of either sort.
We say that $u$ \emph{$\sigma/\sigma'$-crosses} $p \in \R$ if it does a $(\sigma,p,\sigma',p)$-crossing.

With these notations in hand we now formalize the ideas described in our Algorithm \ref{alg:main}. The non-deterministic ``guess (by exhaustive sampling)'' on line 2 of our algorithm should be implemented in the traditional manner: place the remainder of the algorithm in a loop over possible orderings of the sample, with the overall return value equal to the best of the $\pi^4$ rankings found. Our algorithm can be derandomized by choosing  $T_1,\ldots,T_t$ non-deterministically rather than randomly; see Section \ref{sec:runtime} for details.

If $OPT \ge \epsilon^4 n^k$ then the first line of the algorithm is sufficient for a PTAS so for the remainder of the analysis we assume that $OPT \le \epsilon^4 n^k$. \fullOnly{For most of the analysis we actually need something weaker, namely that $OPT$ is at most some sufficiently small constant times $\epsilon^2 n^k$. We only need the full $OPT \le \epsilon^4 n^k$ in one place in Section \ref{sec:analysis}.}

%%%%%%%%%%%%%%
\section{Runtime analysis} \label{sec:runtime}

By Theorem~\ref{thm:addApproxRank} the additive approximation step takes time $n^{O(1)} 2^{\tilde O(1/\epsilon^{10})}$. There are at most $(1/\epsilon)^{t \cdot (k-1)} = 2^{\tilde O(1/\epsilon)}$ bucketed orderings $\sigma^0$ to try. The PTAS for FAST takes time $n^{O(1)} 2^{\tilde O(1/\epsilon^6)}$ by Theorem~\ref{thm:fastPTAS}. The overall runtime is 
\[
n^{O(1)} 2^{\tilde O(1/\epsilon^{10})} + 2^{\tilde O(1/\epsilon)} \cdot \left(n^{O(1)} + n^{O(1)} 2^{\tilde O(1/\epsilon^6)} \right) = n^{O(1)} 2^{\tilde O(1/\epsilon^{10})}
.\]

Derandomization increases the runtime of the two algorithms that we use as subroutines to $n^{poly(1/\epsilon)}$. There are at most $n^{t \cdot (k-1)} = n^{\tilde O(1/\epsilon)}$ possible sets $T_1,\ldots T_t$ that the derandomized algorithm must consider. Therefore the overall runtime is 
\[(n^{poly(1/\epsilon)} + n^{poly(1/\epsilon)}\cdot 2^{\tilde O(1/\epsilon)} \cdot n^{poly(1/\epsilon)}=n^{poly(1/\epsilon)}
.\]

%%%%%%%%%%%%%%
\section{Analysis of $\sigma^1$} \label{sec:sigmaOne}

Let $\roundOpt = Round(\pi^*)$.
Call vertex $v$ \emph{costly} if $b(\roundOpt, v, \roundOpt(v)) \ge \costlyB$ and \emph{non-costly} otherwise.

\begin{lemma}\label{lem:fewCostly}
The number of costly vertices is at most $\frac{k \cdot OPT}{\epsilon \binom{k}{2} \binom{n-1}{k-1}}$.
\end{lemma}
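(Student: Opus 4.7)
The plan is a Markov-style counting. The starting point is the identity
\[
\sum_{v \in V} b(\pi^*, v, \pi^*(v)) = k \cdot OPT,
\]
which holds because each constraint violated by $\pi^*$ is counted once per vertex it contains. Were ``costly'' defined in terms of $\pi^*$, the lemma would be an immediate Markov-type inequality. The task is to bridge from $\roundOpt$ back to $\pi^*$.

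Structurally, $\roundOpt$ agrees with $\pi^*$ on every pair of vertices lying in different buckets. Call $u,v$ \emph{same-bucket} if $\lfloor \pi^*(u)/(\epsilon n)\rfloor = \lfloor \pi^*(v)/(\epsilon n)\rfloor$. Across different buckets the integer part $\epsilon n \lfloor \cdot \rfloor$ dominates the tie-breaker $u/(n+1) < 1$, so the order in $\roundOpt$ matches $\pi^*$; within a bucket the tie-breaker reorders by vertex label. Hence for every constraint $S$ whose $k$ vertices lie in $k$ distinct buckets, $c(\restrictO{\roundOpt}{S}) = c(\restrictO{\pi^*}{S})$. Letting
\[
X(v) = \bigl|\{S \in \binom{V}{k} : v \in S,\ \text{some pair in } S \text{ is same-bucket}\}\bigr|,
\]
we obtain $b(\roundOpt, v, \roundOpt(v)) \le b(\pi^*, v, \pi^*(v)) + X(v)$.

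The main calculation is to show $X(v) \le \binom{k}{2}\epsilon\binom{n-1}{k-1}$. Split by whether the offending pair contains $v$. Since $\pi^*$ is a bijection to $\{1,\ldots,n\}$ and each bucket corresponds to an interval of $\epsilon n$ positions, every bucket has at most $\epsilon n$ vertices. Hence there are at most $\epsilon n - 1$ same-bucket pairs involving $v$, each extending in $\binom{n-2}{k-2}$ ways. By convexity of $\binom{\cdot}{2}$ subject to $\sum_i b_i = n$, the number of same-bucket pairs avoiding $v$ is at most $(1/\epsilon)\binom{\epsilon n}{2}$, each extending in $\binom{n-3}{k-3}$ ways. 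Rewriting $\binom{n-2}{k-2} = \frac{k-1}{n-1}\binom{n-1}{k-1}$ and $\binom{n-3}{k-3} = \frac{(k-1)(k-2)}{(n-1)(n-2)}\binom{n-1}{k-1}$ and simplifying, the two contributions are at most $\epsilon(k-1)\binom{n-1}{k-1}$ and $\epsilon\binom{k-1}{2}\binom{n-1}{k-1}$ respectively, summing to $\binom{k}{2}\epsilon\binom{n-1}{k-1}$.

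The factor $2$ in the definition of costly is calibrated exactly so that, after subtracting $X(v)$, a costly vertex satisfies $b(\pi^*, v, \pi^*(v)) \ge \binom{k}{2}\epsilon\binom{n-1}{k-1}$. Summing over costly vertices gives $(\#\text{costly})\binom{k}{2}\epsilon\binom{n-1}{k-1} \le \sum_v b(\pi^*, v, \pi^*(v)) = k \cdot OPT$, which rearranges to the claimed bound. The main obstacle I expect is pinning down the constants in $X(v)$: the $\binom{k}{2}$ has to come out exactly, which is precisely why the costliness threshold carries a factor of $2$ in front of $\binom{k}{2}$ rather than something smaller.
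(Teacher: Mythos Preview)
Your proof is correct and follows the same strategy as the paper's: both bound $b(\roundOpt, v, \roundOpt(v)) - b(\pi^*, v, \pi^*(v))$ by $\epsilon\binom{k}{2}\binom{n-1}{k-1}$ and then apply the Markov-type averaging via $\sum_v b(\pi^*,v,\pi^*(v)) = k\cdot OPT$. Your explicit case split on whether the same-bucket pair contains $v$ is in fact more careful than the paper's one-line ``union bound'' justification, and the identity $(k-1)+\binom{k-1}{2}=\binom{k}{2}$ neatly explains the constant.

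One minor slip: convexity of $\binom{\cdot}{2}$ subject to $\sum_i b_i = n$ gives a \emph{lower} bound at the equal-bucket point, not an upper bound. The inequality you actually need follows from $\binom{b_i}{2}\le b_i(\epsilon n-1)/2$ (using $b_i\le\epsilon n$) and summing to obtain $\sum_i \binom{b_i}{2}\le n(\epsilon n-1)/2=(1/\epsilon)\binom{\epsilon n}{2}$.
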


\begin{proof}%[Proof of Lemma \ref{lem:fewCostly}]
At most an $\epsilon$ fraction of all pairs of vertices are a $\pi^*$/$\roundOpt$-inversion. Therefore by union bound at most an $\epsilon \binom{k}{2}$ fraction of the $\binom{n-1}{k-1}$ possible constraints involving any particular vertex $v$ contain a $\pi^*$/$\roundOpt$-inversion. Therefore for any costly $v$ we have
\[
\costlyB \le b(\roundOpt, v, \roundOpt(v)) \le b(\pi^*, v, \pi^*(v)) + \epsilon \binom{k}{2} \cdot \binom{n-1}{k-1} 
.\]
Rearranging we get
\[
b(\pi^*, v, \pi^*(v)) \ge \costlyB - \epsilon \binom{k}{2} \cdot \binom{n-1}{k-1} = \epsilon \binom{k}{2} \cdot \binom{n-1}{k-1}
.\]

Finally observe that $k C(\pi^*) = \sum_v b(\pi^*, v, \pi^*(v)) \ge (\text{number costly}) \epsilon \binom{k}{2} \binom{n-1}{k-1}$, completing the proof.
\end{proof}

\begin{lemma}\label{lem:fragile} Let $\sigma$ be an ordering of $V$, $|V|=n$, $v\in V$ be a vertex and $p,p' \in \R$.
Let $B$ be the set of vertices (excluding $v$) between $p$ and $p'$ in $\sigma$. 
Then
$b(\sigma, v, p) + b(\sigma, v, p') \ge \frac{|B|}{(n-1)3^{k-1}}\binom{n-1}{k-1}$. 
\end{lemma}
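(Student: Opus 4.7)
The plan is to combine weak fragility with a direct counting argument. Assume without loss of generality that $p < p'$, and partition $V \setminus \{v\}$ into three sets according to $\sigma$: $L = \{u : \sigma(u) < p\}$, $M = \{u : p < \sigma(u) < p'\}$ (which equals $B$), and $R = \{u : \sigma(u) > p'\}$, so that $|L|+|M|+|R|=n-1$. For any $(k-1)$-subset $T \subseteq V \setminus \{v\}$, let $\pi^p = \restrictO{\sigma}{T} \bp v \unaryOrdering p$ and $\pi^{p'} = \restrictO{\sigma}{T} \bp v \unaryOrdering p'$; these are exactly the orderings summed in $b(\sigma,v,p)$ and $b(\sigma,v,p')$ respectively. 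They differ only in the position of $v$, which sits at position $|T\cap L|+1$ in $\pi^p$ and at position $|T\cap L|+|T\cap M|+1$ in $\pi^{p'}$.

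Next I will identify families of subsets $T$ for which $\pi^p$ and $\pi^{p'}$ are related by a single weak-fragile vertex move of $v$, so that weak fragility forces $c(\pi^p)+c(\pi^{p'})\ge 1$. Matching the pair of $v$-positions against the four allowed weak-fragile moves $(\pi(v),p^{*}) \in \{(1,5/2),(1,k+1/2),(k,k-3/2),(k,1/2)\}$ pinpoints exactly three cases: (i) $|T\cap M|=1$ with $T\setminus M\subseteq R$ (swap first two, with $v$ going from position $1$ to $2$), (ii) $|T\cap M|=1$ with $T\setminus M\subseteq L$ (swap last two, viewed in reverse as $v$ going from position $k$ to $k-1$, which is fine since the weak fragility inequality is symmetric in $\pi$ and $\pi'$), and (iii) $T\subseteq M$ (cyclic shift of $v$ from position $1$ to $k$). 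Summing the contributions from these three families yields
\begin{equation*}
b(\sigma,v,p)+b(\sigma,v,p') \;\ge\; |M|\binom{|R|}{k-2} + |M|\binom{|L|}{k-2} + \binom{|M|}{k-1}.
\end{equation*}

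The final step is to show this right-hand side is at least $\frac{|B|}{(n-1)3^{k-1}}\binom{n-1}{k-1}$. Since $|L|+|M|+|R|=n-1$, by pigeonhole at least one of the three sizes is $\ge (n-1)/3$. If $|L|$ or $|R|$ attains this bound, I use the corresponding summand $|M|\binom{|L|}{k-2}$ or $|M|\binom{|R|}{k-2}$; if instead $|M|\ge (n-1)/3$, I use $\binom{|M|}{k-1}$ together with $|B|\le n-1$. Each comparison then reduces to an elementary binomial inequality roughly of the form $\binom{x/3}{j}\ge \binom{x}{j}/3^{j+O(1)}$, and the constant $3^{k-1}$ in the target is chosen precisely to absorb the loss from the three-way pigeonhole split.

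The main technical obstacle is carrying through this final combinatorial comparison in the balanced regime where none of $|L|,|M|,|R|$ greatly dominates the others: the three-term lower bound is weakest there, and the asymmetric role of $M$ versus $L,R$ (visible in the $(k-1)$ vs.\ $(k-2)$ discrepancy between terms) requires some bookkeeping to get the factor $3^{k-1}$ exactly. The conceptually central step, however, is the identification of the three weak-fragile-compatible families of $T$, since that is the only place the fragility hypothesis is used, and it is also what distinguishes the argument from the easier one for (strongly) fragile constraints, where every $T$ with $T\cap M\neq\emptyset$ would contribute.
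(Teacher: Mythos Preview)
Your proposal is correct and follows essentially the same approach as the paper. Both arguments identify the same three families of $(k-1)$-subsets $T$ compatible with a weak-fragile move of $v$ (namely $T\subseteq M$, and $|T\cap M|=1$ with the rest entirely in $L$ or entirely in $R$), and then lower-bound the resulting count by a pigeonhole case split on which of $L,M,R$ has size at least roughly $n/3$; the paper presents this as the dichotomy $|B|\ge n/3$ versus $|B|<n/3$ and, like you, relies on an elementary $\binom{x/3}{j}\gtrsim 3^{-j}\binom{x}{j}$ comparison valid for sufficiently large $n$.
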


\begin{proof}%[Proof of Lemma \ref{lem:fragile}]
By definition
\begin{equation}
b(\sigma, v, p) + b(\sigma, v, p') = \sum_{Q : \cdots} \left[ c(\restrictO{\sigma}{Q} \bp v \unaryOrdering p)+c(\restrictO{\sigma}{Q} \bp v \unaryOrdering p') \right] \label{eqn:pen}
\end{equation}
where the sum is over sets $Q \subseteq U \sm \set{v}$ of $k-1$ vertices.

We first consider the illustrative special case of betweenness tournament (or more generally fragile problems with arity $k=3$). Betweenness constraints have a special property: the quantity in brackets in (\ref{eqn:pen}) is at least 1 for every $Q$ that has at least one vertex between $p$ and $p'$ in $\pi$. There are at least $|B| (n-2)/2$ such sets, which can easily be lower-bounded by the desired $\frac{|B|}{(n-1)3^{3-1}}\binom{n-1}{3-1}$.

Returning to the general case of weak fragility, observe that the quantity in brackets in (\ref{eqn:pen}) is at least 1 for every $Q$ that either has all $k-1$ vertices between $p$ and $p'$ in $\sigma^2$ or has one vertex between them and the remaining $k-2$ either all before or all after. To lower-bound the number of such $Q$ we consider two cases.

If $|B| \ge |V|/3$ then the number of such $Q$ is at least $\binom{|B|}{k-1} = \frac{|B|}{k-1}\binom{|B|-1}{k-2} \geq \frac{|B|}{2 \cdot (k-1)3^{k-2}}\binom{n-2}{k-2}$ for sufficiently large $n$.

If $|B| < |V|/3$ then either at least $|V|/3$ vertices are before or at least $|V|/3$ vertices are after hence the number of such $Q$ is at least 
$|B|\binom{|V|/3}{k-2} \ge \frac{|B|}{2 \cdot 3^{k-2}}\binom{n-2}{k-2} \ge \frac{|B|}{(k-1) \cdot 3^{k-1}}\binom{n-2}{k-2}$ for sufficiently large $n$.
\end{proof}

For vertex $v$ we say that a position $p \in \posns(v) $ is  \emph{$v$-out of place} if there are at least $6 \binom{k}{2} 3^{k-1} \epsilon n$ vertices between $p$ and $\roundOpt(v)$ in $\roundOpt$. We say vertex $v$ is \emph{out of place} if $\sigma^1(v)$ is $v$-out of place.

\begin{lemma}\label{lem:random}
The number of non-costly out of place vertices is at most $\epsilon n / 2$ with probability at least 9/10.
\end{lemma}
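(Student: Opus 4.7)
The strategy is to bound $\E{N}$, where $N$ is the number of non-costly out-of-place vertices, and then invoke Markov's inequality: it suffices to show $\E{N} \le \epsilon n/20$ so that $\pr{N \ge \epsilon n/2} \le 1/10$. By linearity of expectation and a union bound over the $|\posns(v)| \le 1/\epsilon + 1$ candidate positions, the problem reduces to bounding $\pr{\hat b(v,p) \le \hat b(v,\roundOpt(v))}$ for a single non-costly $v$ and a single $v$-out of place $p \in \posns(v)$.

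For each such pair $(v,p)$ I first verify unbiasedness: the random variables $X_{i,p} := c(\restrictO{\roundOpt}{T_i} \bp v \unaryOrdering p) \cdot \one{v \notin T_i}$ for $i = 1,\ldots,t$ are i.i.d.\ Bernoulli with mean $\mu_p := b(\roundOpt,v,p)/\binom{n}{k-1}$ (using that $T_i$ conditioned on $v \notin T_i$ is uniform over $\binom{V\sm\set{v}}{k-1}$, together with the observation that $\restrictO{\sigma^0}{T_i} = \restrictO{\roundOpt}{T_i}$ since $T_i \subseteq \bigcup_j T_j$), so $\hat b(v,p) = \frac{\binom{n}{k-1}}{t}\sum_i X_{i,p}$ is a scaled-Binomial unbiased estimator of $b(\roundOpt,v,p)$. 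To quantify the gap between $\mu_p$ and $\mu_{\roundOpt(v)}$, I apply Lemma \ref{lem:fragile} to $\sigma = \roundOpt$ at positions $p$ and $\roundOpt(v)$: the $v$-out of place condition guarantees $|B| \ge 6\binom{k}{2}3^{k-1}\epsilon n$ intermediate vertices, so the lemma yields $b(\roundOpt,v,p) + b(\roundOpt,v,\roundOpt(v)) \ge 6\binom{k}{2}\epsilon\binom{n-1}{k-1}$. Combined with the non-costliness bound $b(\roundOpt,v,\roundOpt(v)) < 2\binom{k}{2}\epsilon\binom{n-1}{k-1}$, this simultaneously gives an additive gap $b(\roundOpt,v,p) - b(\roundOpt,v,\roundOpt(v)) \ge 2\binom{k}{2}\epsilon\binom{n-1}{k-1}$ and, crucially, a multiplicative ratio $\mu_p/\mu_{\roundOpt(v)} \ge 2$.

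The tail is then bounded by splitting at an intermediate threshold $c \approx 3\binom{k}{2}\epsilon t$ and applying a multiplicative Chernoff bound to each of $\pr{\sum_i X_{i,p} \le c}$ and $\pr{\sum_i X_{i,\roundOpt(v)} \ge c}$. The calibration $t = 14\ln(40/\epsilon)/(\binom{k}{2}\epsilon)$ is chosen so that each tail is a sufficiently small polynomial in $\epsilon$; summing over the $O(1/\epsilon)$ positions per vertex and the at most $n$ non-costly vertices yields $\E{N} \le \epsilon n/20$, and the Markov step finishes the proof.

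The main obstacle is getting strong enough concentration from only $t = \Theta(\ln(1/\epsilon)/\epsilon)$ samples. A naive additive Hoeffding bound on the $\Theta(\epsilon)$ gap gives only an $\exp(-\Theta(\epsilon\ln(1/\epsilon)))$ per-pair failure probability, which does not decay as $\epsilon \to 0$. The resolution, and the reason the analysis must extract a multiplicative gap rather than merely an additive one, is that a multiplicative Chernoff bound produces an exponent of order $\Theta(\ln(1/\epsilon))$ and hence a per-pair failure probability polynomial in $\epsilon$, exactly as required to combine with the union bound over positions and the Markov step.
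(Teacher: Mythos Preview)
Your proposal is correct and follows essentially the same route as the paper: establish unbiasedness of $\hat b$, use Lemma~\ref{lem:fragile} together with non-costliness to get the multiplicative gap $\mu_p \ge 2\mu_{\roundOpt(v)}$, apply multiplicative Chernoff bounds at an intermediate threshold, union-bound over the $O(1/\epsilon)$ positions, and finish with Markov. The only differences are cosmetic (your threshold $c \approx 3\binom{k}{2}\epsilon t$ versus the paper's $(4/3)Mt$, and your explicit remark about why Hoeffding is insufficient).
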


\begin{proof}%[Proof of Lemma \ref{lem:random}]
Focus on some $v \in V$ and $p \in \posns(v)$. From the definition of out-of-place and Lemma \ref{lem:fragile} we have
\[
b(\roundOpt, v, \roundOpt) + b(\roundOpt, v, p) \ge \frac{6 \binom{k}{2} 3^{k-1} \epsilon n}{(n-1)3^{k-1}}\binom{n-1}{k-1} \ge 6 \epsilon \binom{k}{2} \binom{n-1}{k-1}
\]
for any $v$-out of place $p$. Next recall that for non-costly $v$ we have
\begin{equation}
b(\roundOpt, v, \roundOpt(v)) < \costlyB \label{eqn:onCheap}
\end{equation}
hence 
\begin{equation}
b(\roundOpt, v, p) > 4 \binom{k}{2} \epsilon \binom{n-1}{k-1} \label{eqn:offExpensive}
\end{equation}
for any $v$-out of place $p$.

Recall that
\[
\hat b(v,p) = \frac{\binom{n}{k-1}}{t} \sum_{i : v \not\in T_i} c(\restrictO{\sigma^0}{T_i} \bp v \unaryOrdering p)
\]
for any $p$. Each term of the sum is a 0/1 random variable with mean $\mu(p)=\frac{1}{\binom{n}{k-1}} \sum_{Q \in \binom{V}{k-1} : v \not \in Q} c(\restrictO{\sigma^\Box}{Q} \bp v \unaryOrdering p) = \frac{1}{\binom{n}{k-1}} b(\sigma^\Box, v, p)$. Therefore $\E{\hat b(v,p)} = b(\sigma^\Box, v,p)$. We can bound $\mu(\roundOpt(v)) \le \costlyB / \binom{n}{k-1} \equiv M$ using (\ref{eqn:onCheap}). For any $v$-out of place $p$ we can bound $\mu(p) \ge 2M$ by (\ref{eqn:offExpensive}).

We can bound the probability that sum in $\hat b(v,\roundOpt(v))$ is at least $(1+1/3)Mt$ using a Chernoff bound as 
\[
\exp(-(1/3)^2 M t / 3) \le \exp\left(- \frac{1}{9} \cdot \frac{1}{\binom{n}{k-1}} \cdot 2 \binom{k}{2} \epsilon \binom{n-1}{k-1} \cdot \frac{14 \ln (40/\epsilon)}{\binom{k}{2} \epsilon} \cdot \frac{1}{3}\right) \le \epsilon / 40
\]
for sufficiently large $n$. Similarly for any $v$-out of place $p$ we can bound the probability that $\hat b(v,p)$ is at most $(1-1/3)Mt$ by $\exp(-(1/3)^2 M t / 2) \le (\epsilon / 40)^3$.  Therefore by union bound the probability of some $v$-out of place $p$ having $\hat b(v,p)$ too small is at most $\epsilon^2 / 40^3 \le \epsilon / 40$. Clearly $4(1-1/3) \ge 2(1+1/3)$ so each vertex $v$ is out of place with probability at least $\epsilon /20$. A Markov bound completes the proof.
\end{proof}

\begin{lemma}\label{lem:piOne}
With probability at least 9/10 the following are simultaneously true:
\begin{enumerate}
\item The number of out of place vertices is at most $\epsilon n$.
\item The number of vertices $v$ with $|\sigma^1(v) - \roundOpt(v)| > 3 k^2 3^{k-1} \epsilon n$ is at most $\epsilon n$
\item $d(\sigma^1, \roundOpt) \le 6k^2 3^{k-1} \epsilon n^2$
\end{enumerate}
\end{lemma}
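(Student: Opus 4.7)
The plan is to derive each of the three claims from the conclusions of Lemmas \ref{lem:fewCostly} and \ref{lem:random}, exploiting two elementary facts about $\roundOpt$: both $\roundOpt$ and $\sigma^1$ take values in the bucket set $\posns(\cdot)$ (an arithmetic progression with step $\epsilon n$), and since $\pi^*$ is a permutation each bucket of $\roundOpt$ contains exactly $\epsilon n$ vertices. Since (2) and (3) will follow deterministically from (1) and (2) respectively, a single probability-$9/10$ event---that of Lemma \ref{lem:random}---suffices for all three conclusions. To establish (1), note that under the standing assumption $OPT \le \epsilon^4 n^k$, Lemma \ref{lem:fewCostly} yields at most $O(\epsilon^3 n) \le \epsilon n/2$ costly vertices, and Lemma \ref{lem:random} contributes at most $\epsilon n/2$ non-costly out-of-place vertices, so the total is at most $\epsilon n$.

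For (2), I would show that any $v$ with $|\sigma^1(v) - \roundOpt(v)| > 3 k^2 3^{k-1} \epsilon n$ is automatically out of place, making (2) an immediate consequence of (1). Indeed, since both $\sigma^1(v)$ and $\roundOpt(v)$ lie in $\posns(v)$, they must be separated by $d \epsilon n$ for some positive integer $d > 3 k^2 3^{k-1}$. The open interval between them in $\roundOpt$ spans at least $d-1$ full buckets, each of which contains exactly $\epsilon n$ vertices, so the number of vertices strictly between is at least $(d-1) \epsilon n \ge 6 \binom{k}{2} 3^{k-1} \epsilon n$ (using $6 \binom{k}{2} = 3k(k-1) < 3 k^2$), matching the definition of $v$-out-of-place.

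For (3), I would bound Kendall-Tau distance by partitioning the $\sigma^1/\roundOpt$-inversions according to whether both endpoints have small shift. Writing $L := 3 k^2 3^{k-1} \epsilon n$, class (a) consists of pairs with at least one endpoint of shift exceeding $L$ and contributes at most $\epsilon n \cdot n$ pairs by (2). For class (b), pairs $\set{u,w}$ with both shifts bounded by $L$, being an inversion forces $|\roundOpt(u) - \roundOpt(w)| \le 2L$; since only the $O(L/(\epsilon n))$ buckets near $\roundOpt(u)$ qualify and each contains $\epsilon n$ vertices, there are at most $O(L)$ such $w$ for each fixed $u$, yielding $O(Ln) = O(k^2 3^{k-1} \epsilon n^2)$ pairs overall. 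Summing the two classes gives the claimed $6 k^2 3^{k-1} \epsilon n^2$ bound for sufficiently large $n$, with the $\epsilon n^2$ class-(a) term absorbed as a lower-order correction. The main obstacle will be careful constant bookkeeping: the slack between the $6 \binom{k}{2} 3^{k-1}$ threshold in the definition of out of place, the $3 k^2 3^{k-1}$ shift bound in (2), and the $6 k^2 3^{k-1}$ bound in (3) must all compose consistently through the bucket-counting arguments, together with bucket-endpoint effects and the absorption of the $O(\epsilon n^2)$ class-(a) correction into the main term.
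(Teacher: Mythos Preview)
Your approach is essentially the paper's: derive (1) from Lemmas~\ref{lem:fewCostly} and~\ref{lem:random}, reduce (2) to (1) via bucket counting, and bound (3) by splitting inversions into an exceptional class and a ``both close'' class. The paper's proof is almost identical.

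There is one genuine wrinkle in part (3). You take the exceptional set to be the vertices with shift exceeding $L=3k^{2}3^{k-1}\epsilon n$ from part (2). Then in class~(b) an inversion forces $|\roundOpt(u)-\roundOpt(w)|\le 2L$, so for each $u$ there are at most about $4L/(\epsilon n)$ buckets and hence at most about $4L$ candidates $w$; dividing by~$2$ for unordered pairs gives roughly $2Ln = 6k^{2}3^{k-1}\epsilon n^{2}$ class-(b) pairs. This already equals the target, so the $\epsilon n^{2}$ from class~(a) cannot be ``absorbed as a lower-order correction'': since $k$ is a fixed constant, $\epsilon n^{2}$ and $k^{2}3^{k-1}\epsilon n^{2}$ are the same order. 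The paper avoids this by taking the exceptional set to be the out-of-place vertices from part~(1) directly, whose defining threshold is the tighter $6\binom{k}{2}3^{k-1}=3k(k-1)3^{k-1}$. The class-(b) bound then becomes $12\binom{k}{2}3^{k-1}\epsilon n^{2}=6k(k-1)3^{k-1}\epsilon n^{2}$, leaving slack $6k\cdot 3^{k-1}\epsilon n^{2}\ge \epsilon n^{2}$ to absorb class~(a). Your anticipated ``constant bookkeeping'' obstacle is therefore real, and the fix is to use part~(1) rather than part~(2) to define the exceptional set.
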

\begin{proof}
By Lemma \ref{lem:fewCostly} and the fact $OPT \le \epsilon^4 n^k$ we have at most $\frac{k \cdot OPT}{\binom{k}{2} \epsilon \binom{n-1}{k-1}} \le \epsilon n / 2$ costly vertices for $n$  sufficiently large. Therefore Lemma \ref{lem:random} implies the first part of the Lemma. We finish the proof by showing that whenever the first part holds the second and third parts hold as well.

Observe that there are exactly $\epsilon n$ vertices in $\roundOpt$ between any two consecutive positions in $\posns(v)$. It follows that any vertex with $|\sigma^1(v) - \roundOpt(v)| > 3k^2 3^{k-1} \epsilon n \ge (6 \binom{k}{2} 3^{k-1} + 1)\epsilon n$ must necessarily be $v$-out of place, completing the proof of the second part of the Lemma.

For the final part observe that if $u$ and $v$ are a $\sigma^1$/$\roundOpt$-inversion and not among the $\epsilon n$ out of place vertices then, by definition of out-of-place, there can be at most  $ 2 \cdot 6 \binom{k}{2} 3^{k-1}\epsilon n$ vertices between $\roundOpt(v)$ and $\roundOpt(u)$ in $\roundOpt$. Each $u$ therefore only $24 \binom{k}{2}3^{k-1}\epsilon n$ possibilities for $v$. Therefore $d(\sigma^1, \roundOpt) \le \epsilon n^2 + 24 \binom{k}{2}3^{k-1}\epsilon n \cdot n / 2 \le 6 \epsilon k^2 3^{k-1}n^2$.
\end{proof}

Our remaining analysis is deterministic, conditioned on the event of Lemma \ref{lem:piOne} holding.

%%%%%%%%%%%%%%
\section{Analysis of $\sigma^2$} \label{sec:sigmaTwo}

The following key Lemma shows the sensitivity of $b(\sigma,v,p)$ to its first and third arguments.
\begin{lemma}\label{lem:bChangeFirst}
For any constraint system $c$ with arity $k \ge 2$, orderings $\sigma$ and $\sigma'$ over vertex set $T \subseteq V$, vertex $v \in V$ and $p,p' \in \R$ we have
\begin{align*}
&1.&|b^c(\sigma, v, p) - b^c(\sigma', v, p')|
&\le \binom{n-2}{k-2} (number\ of\ crossings) + \binom{n-3}{k-3} d(\sigma, \sigma')\\
&2.&|b^c(\sigma, v, p) - b^c(\sigma', v, p')|
&\le \binom{n-2}{k-2} \left(|net\ flow| + k \sqrt{d(\sigma, \sigma')}\right) \\
\end{align*}
where $\binom{n-3}{k-3}=0$ if $k=2$, $(net\ flow)$ is $|\setof{v \in T}{\sigma'(v) > p'}| - |\setof{v \in T}{\sigma(v) > p}|$, and $(number\ of\ crossings)$ is the number of $v \in T$ that do a $(\sigma,p,\sigma',p')$-crossing.
\end{lemma}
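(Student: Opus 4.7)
The plan is to expand the difference as a sum over the $(k-1)$-subsets $Q$ used to define $b^c$ and argue that only a controlled number of summands are nonzero. Writing
\[
b^c(\sigma, v, p) - b^c(\sigma', v, p') = \sum_{Q} \bigl[\, c(\restrictO{\sigma}{Q} \bp v \unaryOrdering p) - c(\restrictO{\sigma'}{Q} \bp v \unaryOrdering p') \,\bigr],
\]
where $Q$ ranges over size-$(k-1)$ subsets of $T \setminus \set{v}$, I would observe that each summand lies in $[-1,1]$ and vanishes whenever the ranking of $Q \cup \set{v}$ induced by $(\sigma,p)$ agrees with the one induced by $(\sigma',p')$. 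That ranking can only change in two ways: either (a) some pair $\set{u_1,u_2} \subseteq Q$ is a $\sigma/\sigma'$-inversion, or (b) some $u \in Q$ does a $(\sigma,p,\sigma',p')$-crossing, changing whether $v$ is before or after $u$. Every nonzero summand has at least one such pair or vertex in $Q$.

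For part 1, I would bound the two types by direct counting. Each $\sigma/\sigma'$-inverted pair $\set{u_1,u_2}$ fixes two of the $k-1$ slots of $Q$, so case (a) accounts for at most $d(\sigma,\sigma') \cdot \binom{n-3}{k-3}$ sets (and none when $k=2$, matching $\binom{n-3}{-1}=0$). Each crossing vertex $u$ fixes one slot, so case (b) accounts for at most (number of crossings) $\cdot \binom{n-2}{k-2}$ sets. Summing gives part 1.

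For part 2, let $L$ and $R$ denote the numbers of left-to-right and right-to-left crossings, so (number of crossings)$\,=L+R$ and a quick check of the definition of net flow gives (net flow)$\,=L-R$. The key observation is that any left-to-right crossing $u$ and right-to-left crossing $w$ satisfy $\sigma(u) < p < \sigma(w)$ and $\sigma'(w) < p' < \sigma'(u)$, so $\set{u,w}$ is a $\sigma/\sigma'$-inversion; the $L\cdot R$ such distinct pairs are disjoint inversions, yielding $L \cdot R \le d(\sigma,\sigma')$ and hence $\min(L,R) \le \sqrt{d(\sigma,\sigma')}$. Therefore
\[
L + R = |L - R| + 2\min(L,R) \le |\text{net flow}| + 2\sqrt{d(\sigma,\sigma')}.
\]
Using $\binom{n-3}{k-3} = \tfrac{k-2}{n-2}\binom{n-2}{k-2}$ together with the trivial bound $\sqrt{d(\sigma,\sigma')} \le n-2$ (for $n$ sufficiently large), the second term in part 1 is at most $(k-2)\binom{n-2}{k-2}\sqrt{d(\sigma,\sigma')}$. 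Adding the $2\binom{n-2}{k-2}\sqrt{d(\sigma,\sigma')}$ from the crossing term produces the bound $\binom{n-2}{k-2}(|\text{net flow}| + k\sqrt{d(\sigma,\sigma')})$ claimed in part 2.

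The main obstacle is the bookkeeping in part 2 --- namely, recognizing that opposing crossings automatically generate inversions, which is what lets us trade a second-moment-style bound ($\sqrt{d}$) for the crude crossing count. Once this observation is in place the rest is routine binomial manipulation.
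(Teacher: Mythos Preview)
Your proposal is correct and follows essentially the same argument as the paper: the same decomposition into crossing-type and inversion-type summands for part 1, and for part 2 the same key observation that opposite-direction crossings are automatically inversions, giving $L\cdot R \le d(\sigma,\sigma')$ and hence $L+R \le |\text{net flow}| + 2\sqrt{d(\sigma,\sigma')}$, followed by the same binomial simplification $\binom{n-3}{k-3}d \le (k-2)\binom{n-2}{k-2}\sqrt{d}$ for large $n$. One minor wording nit: the $L\cdot R$ pairs are distinct inversions, not ``disjoint'' ones (they share vertices), but distinctness is all you need.
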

\begin{proof}
Fix $\sigma$, $\sigma'$, $T$, $v$, $p$ and $p'$. Let $L$ (resp. $R$) denote the vertices in $T$ that do left to right (resp. right to left) $(\sigma,p,\sigma',p')$-crossings. It is easy to see that a constraint $\set{v} \cup Q$, $Q \in \binom{T \sm \set{v}}{k-1}$ contributes identically to $b(\sigma, v, p)$ and $b(\sigma', v, p')$ unless either:
\begin{enumerate}
\item $Q$ and $(L \cup R)$ have non-empty intersection (or)
\item $Q$ contains a $\sigma/\sigma'$-inversion $\set{s,t}$.
\end{enumerate}
The first part of the Lemma follows easily.

Towards proving the second part we first bound $|L|+|R|$. Observe that $|L| = |R| + (net\ flow)$. Assume w.l.o.g.\ that $(net\ flow) \ge 0$. Observe that every pair $v \in L$ and $w \in R$ are a $\sigma/\sigma'$-inversion, hence
$d(\sigma,\sigma') \ge |L|\cdot|R| = (|R| + (net\ flow))|R| \ge |R|^2$. We conclude that $|L| + |R| = 2|R| + (net\ flow) \le 2\sqrt{d(\sigma,\sigma')} + (net\ flow)$. Therefore the number of constraints of the first type is at most $\binom{n-2}{k-2}(2\sqrt{d(\sigma,\sigma')} + (net\ flow))$.

To simplify we bound
\begin{align*}
\binom{n-3}{k-3}d(\sigma,\sigma') &=\binom{n-2}{k-2}\sqrt{d(\sigma,\sigma')}\cdot \frac{k-2}{n-2} \cdot \sqrt{d(\sigma,\sigma')} \\
&\le \binom{n-2}{k-2}\sqrt{d(\sigma,\sigma')} \cdot (k-2)\frac{\sqrt{n(n-1)/2} }{n-2} \le (k-2)\binom{n-2}{k-2}\sqrt{d(\sigma,\sigma')}
\end{align*}
for sufficiently large $n$.
\end{proof}

Observe that the quantity $net\ flow$ in Lemma \ref{lem:bChangeFirst} is zero whenever $p=p'$ and $\sigma$ and $\sigma'$ are both \emph{rankings}. Therefore we have the following useful corollary.

\begin{corollary} \label{lem:fastSVMlandscape}
Let $\pi$ and $\pi'$ be rankings over vertex set $U$ and $w$ a FAST instance over $U$. Then $|b^w(\pi, v, p) - b^w(\pi', v, p)| \le 2 (\max_{r,s} w_{rs}) \sqrt{d(\pi,\pi')}$ for all $v$ and $p \in \R \sm \Z$.
\end{corollary}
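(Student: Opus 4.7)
The plan is to apply the first part of Lemma~\ref{lem:bChangeFirst} with $k=2$, $\sigma = \pi$, $\sigma' = \pi'$, and $p = p'$. Since $\binom{n-3}{k-3} = 0$ when $k=2$, that bound reduces to the number of $(\pi,p,\pi',p)$-crossings alone, with no contribution from the $d(\sigma,\sigma')$ term. To accommodate the possibly non-$\{0,1\}$ range of the effective constraint function coming from $w$, I would normalize by considering $c := w/(\max_{r,s} w_{rs})$, which takes values in $[0,1]$ and satisfies $b^w = (\max_{r,s} w_{rs}) \cdot b^c$; the $[0,1]$ extension of Lemma~\ref{lem:bChangeFirst} promised in the footnote after the ranking CSP definition then applies to $c$.

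It therefore suffices to bound the number of crossings by $2\sqrt{d(\pi,\pi')}$. First I would observe that the crossings counted by Lemma~\ref{lem:bChangeFirst} live in $U \sm \set{v}$, which is a subset of the full vertex set $U$, so it is enough to bound the total number of $(\pi,p,\pi',p)$-crossings in $U$ itself. Let $L$ and $R$ denote the vertices of $U$ doing left-to-right and right-to-left crossings, respectively. Since $\pi$ and $\pi'$ are both bijections from $U$ onto $\set{1,\ldots,|U|}$ and $p \notin \Z$, the number of vertices of $U$ with rank strictly greater than $p$ equals $|U| - \lfloor p \rfloor$ in both rankings, so $|L| = |R|$; that is, the ``net flow'' across $p$ vanishes.

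The remaining step mimics the second part of the proof of Lemma~\ref{lem:bChangeFirst}: any pair $(u_1,u_2) \in L \times R$ satisfies $\pi(u_1) < p < \pi(u_2)$ and $\pi'(u_1) > p > \pi'(u_2)$, so $\set{u_1,u_2}$ is a $\pi/\pi'$-inversion. Hence $|L|\cdot|R| \le d(\pi,\pi')$, which together with $|L|=|R|$ gives $|L|=|R| \le \sqrt{d(\pi,\pi')}$ and therefore $|L|+|R| \le 2\sqrt{d(\pi,\pi')}$. Multiplying by $\max_{r,s} w_{rs}$ produces the claimed bound. I do not foresee any substantial obstacle here; the only subtle point is recognizing that the hypothesis $p \notin \Z$ is needed precisely to make ``the number of ranks exceeding $p$'' unambiguously equal for $\pi$ and $\pi'$, which is what forces the net flow to be zero and lets the $\sqrt{d}$ bound kick in without a leftover additive term.
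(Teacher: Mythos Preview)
Your argument is correct and essentially matches the paper's. The paper simply invokes the second part of Lemma~\ref{lem:bChangeFirst} (with $k=2$) after observing that the net flow vanishes when $p=p'$ and both orderings are rankings; you instead invoke the first part and then reprove the $|L|+|R|\le 2\sqrt{d(\pi,\pi')}$ bound by hand, which is exactly the step already contained in the proof of part~2. Your normalization by $\max_{r,s} w_{rs}$ to reduce to the $[0,1]$-valued setting is the natural way to make the weight dependence explicit, and the paper's one-line justification implicitly relies on the same reasoning.
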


\begin{lemma}\label{lem:Usmall}
For $U$ in Algorithm \ref{alg:main} we have
$|V \sm U| \le \frac{k \cdot OPT}{\epsilon \binom{k}{2}\binom{n-1}{k-1}} = O( \frac{n}{\epsilon} \cdot \frac{OPT}{n^k})$.
\end{lemma}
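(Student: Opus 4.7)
The plan is to prove the contrapositive: every non-costly vertex is unambiguous. Once this is established, Lemma~\ref{lem:fewCostly} immediately upper-bounds $|V \setminus U|$ by the number of costly vertices, which is at most $\frac{k \cdot OPT}{\epsilon \binom{k}{2}\binom{n-1}{k-1}}$, as required. The second form of the bound, $O(\frac{n}{\epsilon}\cdot\frac{OPT}{n^k})$, then follows from $\binom{n-1}{k-1} = \Theta(n^{k-1})$.

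To prove the contrapositive, I would fix a non-costly $v$, so that $b(\roundOpt, v, \roundOpt(v)) < \costlyB$, and then try to show that the position $p = \roundOpt(v) \in \posns(v)$ witnesses unambiguity, i.e.\ $b(\sigma^1, v, p) \le 13 k^4 3^{k-1}\epsilon \binom{n-1}{k-1}$. The natural tool is Lemma~\ref{lem:bChangeFirst}(1) applied to $\sigma=\sigma^1$, $\sigma'=\roundOpt$ and $p=p'=\roundOpt(v)$, giving
\[
b(\sigma^1, v, p) \le b(\roundOpt, v, p) + \binom{n-2}{k-2} X + \binom{n-3}{k-3} d(\sigma^1, \roundOpt),
\]
where $X$ counts the $u\neq v$ that $\sigma^1/\roundOpt$-cross $\roundOpt(v)$. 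The third-part bound of Lemma~\ref{lem:piOne} immediately supplies $d(\sigma^1,\roundOpt) \le 6k^2 3^{k-1}\epsilon n^2$, and standard binomial identities convert $\binom{n-2}{k-2}\cdot n$ and $\binom{n-3}{k-3}\cdot n^2$ into expressions of the form $O(k^i)\binom{n-1}{k-1}$.

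The step I expect to be the main obstacle is bounding $X$. My plan is a two-case dichotomy based on Lemma~\ref{lem:piOne}(2). Either $u$ is one of the at most $\epsilon n$ vertices with $|\sigma^1(u)-\roundOpt(u)| > 3k^2 3^{k-1}\epsilon n$, or else any crossing forces $\sigma^1(u)$ and $\roundOpt(u)$ to straddle $\roundOpt(v)$ while remaining within this distance of each other, hence $|\roundOpt(u)-\roundOpt(v)| \le 3k^2 3^{k-1}\epsilon n$. Since $\roundOpt$ partitions $V$ into buckets of width $\epsilon n$ containing exactly $\epsilon n$ vertices apiece, only $O(k^2 3^{k-1}\epsilon n)$ vertices can be that close to $\roundOpt(v)$, so $X = O(k^2 3^{k-1}\epsilon n)$ in total.

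Plugging these bounds back yields two error terms each of size $O(k^4 3^{k-1}\epsilon\binom{n-1}{k-1})$, and a careful constant chase (which is exactly why the algorithm's threshold was set at $13 k^4 3^{k-1}\epsilon \binom{n-1}{k-1}$) confirms that their sum together with the non-costly term $2\binom{k}{2}\epsilon\binom{n-1}{k-1}$ remains below that threshold. Hence $v$ is unambiguous, completing the contrapositive and the proof.
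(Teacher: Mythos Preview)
Your proposal is correct and follows essentially the same route as the paper: show every non-costly vertex is unambiguous by bounding $|b(\sigma^1,v,\roundOpt(v)) - b(\roundOpt,v,\roundOpt(v))|$ via Lemma~\ref{lem:bChangeFirst}(1), then invoke Lemma~\ref{lem:fewCostly}. The only cosmetic difference is that the paper bounds the number of $\roundOpt/\sigma^1$-crossings of $p$ directly from Lemma~\ref{lem:piOne}(1) (out-of-place vertices plus those with $\roundOpt$-position near $p$), whereas you use the equivalent Lemma~\ref{lem:piOne}(2); the resulting constants and structure are the same.
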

\begin{proof}
Observe that the number of vertices that $\roundOpt$/$\sigma^1$-cross a particular $p$ is at most $2 \cdot 6k^2 3^{k-1} \epsilon n$ by Lemma~\ref{lem:piOne} (first part). Therefore we apply Lemmas \ref{lem:piOne} and \ref{lem:bChangeFirst}, yielding
\begin{equation}
|b(\roundOpt, v, p) - b(\sigma^1, v, p)| \le \binom{n-2}{k-2}12k^2 3^{k-1}\epsilon n + \binom{n-3}{k-3} 6k^2 3^{k-1}\epsilon n^2 \le 12\epsilon k^4 3^{k-1}\binom{n-1}{k-1} \label{eqn:s1bClose}
\end{equation}
for all $v$ and $p$.

Fix a non-costly $v$. By definition of costly $b(\roundOpt, v, \roundOpt(v)) \le \costlyB \le k^4 3^{k-1} \epsilon \binom{n-1}{k-1}$, hence $b(\sigma^1, v, \roundOpt(v)) \le 13 k^4 3^{k-1} \epsilon \binom{n-1}{k-1}$, so $v \in U$.

Finally recall Lemma \ref{lem:fewCostly}.
\end{proof}

We define $\pi^\circledast$ to be the ranking induced by the restriction of $\pi^*$ to $U$, i.e.\ $\pi^\circledast = Ranking(\restrictO{\pi^*}{U})$.

\begin{lemma}\label{lem:piTwo}
All vertices in the unambiguous set $U$ satisfy $|\sigma^2(v) - \pi^\circledast(v)|  = O(\epsilon n)$. 
\end{lemma}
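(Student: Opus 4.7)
My plan is to transport the bound $b(\sigma^1, v, \sigma^2(v)) = O(\epsilon)\binom{n-1}{k-1}$ successively from $\sigma^1$ through $\roundOpt$ into $\pi^*$, then invoke Lemma~\ref{lem:fragile} inside $\pi^*$ itself, where local optimality controls the other endpoint of the fragility inequality. The final conversion from $\pi^*$ to $\pi^\circledast$ costs only $|V \sm U|$, which is $O(\epsilon^3 n)$ by Lemma~\ref{lem:Usmall} under the standing assumption $OPT \le \epsilon^4 n^k$.

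Concretely, membership of $v$ in $U$ combined with inequality~(\ref{eqn:s1bClose}) from the proof of Lemma~\ref{lem:Usmall} immediately gives $b(\roundOpt, v, \sigma^2(v)) = O(\epsilon)\binom{n-1}{k-1}$. Next I would apply Lemma~\ref{lem:bChangeFirst}(1) with $\sigma = \roundOpt$, $\sigma' = \pi^*$, and $p = p' = \sigma^2(v)$. Since $|\pi^*(u) - \roundOpt(u)| \le \epsilon n$ for every $u$, only the $O(\epsilon n)$ vertices with $\pi^*(u)$ within $\epsilon n$ of $\sigma^2(v)$ can $\roundOpt/\pi^*$-cross the position $\sigma^2(v)$; and because $\roundOpt$ and $\pi^*$ agree across buckets and only permute vertices inside each of the $1/\epsilon$ buckets of size $\epsilon n$, we have $d(\roundOpt, \pi^*) \le (1/\epsilon)\binom{\epsilon n}{2} = O(\epsilon n^2)$. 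Feeding these into Lemma~\ref{lem:bChangeFirst}(1) yields $b(\pi^*, v, \sigma^2(v)) = O(\epsilon)\binom{n-1}{k-1}$. Local optimality of the globally optimal ranking $\pi^*$ under single-vertex moves then gives $b(\pi^*, v, \pi^*(v)) \le b(\pi^*, v, \sigma^2(v))$, so both quantities are $O(\epsilon)\binom{n-1}{k-1}$. Finally, I would apply Lemma~\ref{lem:fragile} inside $\pi^*$ with $p = \sigma^2(v)$ and $p' = \pi^*(v)$: since $\pi^*$ is a ranking with integer values and $\sigma^2(v) \in \posns(v)$ is non-integer, the number of vertices of $\pi^*$ strictly between these two positions is at least $|\sigma^2(v) - \pi^*(v)| - 1$. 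Rearranging the inequality from Lemma~\ref{lem:fragile} yields $|\sigma^2(v) - \pi^*(v)| = O(3^{k-1}\epsilon n) = O(\epsilon n)$, and combining with $|\pi^*(v) - \pi^\circledast(v)| \le |V \sm U| = O(\epsilon n)$ finishes the proof.

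The main obstacle is handling \emph{costly} vertices that nevertheless lie in $U$: for such a $v$, $b(\roundOpt, v, \roundOpt(v))$ can be large, so applying Lemma~\ref{lem:fragile} directly inside $\roundOpt$ with $p' = \roundOpt(v)$ produces no useful bound. Passing all the way to $\pi^*$ sidesteps this issue, since local optimality forces $b(\pi^*, v, \pi^*(v)) \le b(\pi^*, v, \sigma^2(v))$ regardless of whether $v$ is costly. That detour in turn makes the transfer step from $\roundOpt$ to $\pi^*$ necessary, and this step only survives because $\roundOpt$ is a bucketed rounding of $\pi^*$, keeping both the crossing count and $d(\roundOpt, \pi^*)$ small enough that Lemma~\ref{lem:bChangeFirst} does not blow up the $O(\epsilon)\binom{n-1}{k-1}$ bound.
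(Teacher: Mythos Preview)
Your proof is correct and follows essentially the same route as the paper: transport the $b$-bound from $\sigma^1$ through $\roundOpt$ to $\pi^*$ via (\ref{eqn:s1bClose}) and Lemma~\ref{lem:bChangeFirst}, invoke optimality of $\pi^*$ to control the second endpoint, apply Lemma~\ref{lem:fragile} inside $\pi^*$, and finish with the $|V\sm U|$ correction from Lemma~\ref{lem:Usmall}. Your explicit bounds on the crossing count and on $d(\roundOpt,\pi^*)$ match exactly what the paper summarizes as ``using the definition $\roundOpt = Round(\pi^*)$,'' and your closing remark about costly vertices in $U$ correctly identifies why the detour through $\pi^*$ (rather than working in $\roundOpt$ directly) is needed.
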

\begin{proof}
The triangle inequality $|\sigma^2(v) - \pi^\circledast| \le |\sigma^2(v) - \pi^*(v)| + |\pi^*(v) -  \pi^\circledast|$ allows us to instead bound the two terms $|\sigma^2(v) - \pi^*(v)|$ and $|\pi^*(v) -  \pi^\circledast|$ separately by $O(\epsilon n)$. We bound $|\sigma^2(v) - \pi^*(v)|$ first.

Since $\pi^*$ is a ranking the number of vertices $|B|$ between $\pi^*(v)$ and $\sigma^2(v)$ in $\pi^*$ is at least $|\pi^*(v) - \sigma^2(v)| - 1$. Therefore we have
\begin{align}
\frac{|\pi^*(v) - \sigma^2(v)| - 1}{(n-1)3^{k-1}} \binom{n-1}{k-1} &\le b(\pi^*, v, \sigma^2(v)) + b(\pi^*, v, \pi^*(v)) && \text{(Lemma~\ref{lem:fragile})} \nonumber \\
& \le 2 b(\pi^*, v, \sigma^2(v)) && \text{(Optimality of $\pi^*$)} \label{eq:technical1}
.\end{align}

We next apply the first part of Lemma \ref{lem:bChangeFirst} to $\pi^*$ and $\roundOpt$, bounding the number of crossings and $d(\pi^*, \roundOpt)$ using the definition $\roundOpt = Round(\pi^*)$, yielding
\begin{align}
b(\pi^*, v, \sigma^2(v)) &\le b(\roundOpt, v, \sigma^2(v)) +O(\epsilon n^{k-1})
. \label{eq:technical2}\end{align}

Next recalling (\ref{eqn:s1bClose}) from the proof of Lemma \ref{lem:Usmall} we have
\begin{align}
b(\roundOpt, v, \sigma^2(v))&\le b(\sigma^1, v, \sigma^2(v)) + O(\epsilon n^{k-1})
. \label{eq:technical3} \end{align}

Combining (\ref{eq:technical1}), (\ref{eq:technical2}) and (\ref{eq:technical3}) we conclude that $|\pi^*(v) - \sigma^2(v)| = O(\epsilon n)$.

Now we prove $|\pi^*(v) -  \pi^\circledast|=O(\epsilon n)$.
Lemma~\ref{lem:Usmall}, the definition of $\pi^\circledast$, and the assumption that $OPT \le \epsilon^4 n^k$ imply that $|\pi^\circledast(v) - \pi^*(v)| \le \frac{k \cdot OPT}{\epsilon \binom{k}{2}\binom{n-1}{k-1}} = O(\epsilon n)$.
\end{proof}

%%%%%%%%%%%%%%%%%%%%%%%%%%%%%%%%%%%%%%%%%%%%%%%%%%%%%%%%%%%%%%%%%%%%%%%
%%%%%%%%%%%%%%%%%%%%%%%%%%%%%%%%%%%%%%%%%%%%%%%%%%%%%%%%%%%%%%%%%%%%%%%
\section{Analysis of $\pi^3$} \label{sec:piThree}

Note that all orderings and costs in this section are over $U$, not $V$. We note that  Lemma~\ref{lem:Usmall} and the assumption that $OPT\le \epsilon^4 n^k$ is small imply that $|U|=n - O(\epsilon^3 n)$.

\begin{lemma}\label{lem:wBarFAST}
$\frac{1}{3^{k-1}}(1 - 2/10) \binom{|U|-2}{k-2} \le \bar w_{uv}^{\sigma^2} + \bar w_{vu}^{\sigma^2} \le 2 \binom{|U|-2}{k-2}$, i.e.\ $\bar w^{\sigma^2}$ is a weighted FAST instance.
\end{lemma}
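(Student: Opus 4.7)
The upper bound is immediate from $\bar w^{\sigma^2}_{uv}\le w^{\sigma^2}_{uv}\le\binom{|U|-2}{k-2}$. For the lower bound, I plan to proceed as follows. Assume WLOG $\sigma^2(u)<\sigma^2(v)$. Unpacking definitions shows
\[
w^{\sigma^2}_{uv}+w^{\sigma^2}_{vu} = \sum_{S} [c(\pi^A_S)+c(\pi^B_S)],
\]
where the sum is over the $\binom{|U|-2}{k-2}$ constraints $S\supseteq\set{u,v}$ of size $k$, $\pi^A_S = \restrictO{\sigma^2}{S}$, and $\pi^B_S$ is the ranking obtained from $\pi^A_S$ by moving $v$ just before $u$. (The $w^{\sigma^2}_{vu}$ contribution sees $c(\pi^A_S)$ because in this orientation the $u$-move is a no-op, and $w^{\sigma^2}_{uv}$ sees $c(\pi^B_S)$.)

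The key step is to lower-bound this sum by re-using the weak-fragility case analysis inside the proof of Lemma~\ref{lem:fragile}, specialized to $\sigma=\sigma^2$, moved vertex $v$, $p=\sigma^2(u)-\delta$, $p'=\sigma^2(v)$, and restricted to $Q:=S\setminus\set{v}$ containing $u$. Because $u$ is automatically between $p$ and $p'$, that analysis gives $c(\pi^A_S)+c(\pi^B_S)\ge 1$ in three disjoint sub-cases for $Q\setminus\set{u}$: all in $B':=\setof{w\in U\setminus\set{u,v}}{\sigma^2(u)<\sigma^2(w)<\sigma^2(v)}$ (the $v$-move becomes a cyclic shift of $v$ between ranks $1$ and $k$ of $S$), all in $L:=\setof{w\in U\setminus\set{u,v}}{\sigma^2(w)<\sigma^2(u)}$ (a swap of the last two of $S$), or all in $R:=\setof{w\in U\setminus\set{u,v}}{\sigma^2(w)>\sigma^2(v)}$ (a swap of the first two of $S$). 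Each is a weak-fragile single-vertex move. Summing yields $w^{\sigma^2}_{uv}+w^{\sigma^2}_{vu}\ge \binom{|B'|}{k-2}+\binom{|L|}{k-2}+\binom{|R|}{k-2}$; since $|B'|+|L|+|R|=|U|-2$, pigeonhole together with the standard estimate $\binom{m/3}{k-2}\ge\binom{m}{k-2}/(2\cdot 3^{k-2})$ (valid for large $|U|$, which holds because $OPT\le\epsilon^4 n^k$ forces $|U|=n-O(\epsilon^3 n)$ by Lemma~\ref{lem:Usmall}) gives $w^{\sigma^2}_{uv}+w^{\sigma^2}_{vu}\ge \frac{3}{2\cdot 3^{k-1}}\binom{|U|-2}{k-2}$.

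To conclude, set $M:=\frac{1}{10\cdot 3^{k-1}}\binom{|U|-2}{k-2}$ and split on which of $M, w^{\sigma^2}_{uv}, w^{\sigma^2}_{vu}$ attains the min in $\bar w$'s definition. If $\min=M$, then $\bar w^{\sigma^2}_{uv}+\bar w^{\sigma^2}_{vu}=(w^{\sigma^2}_{uv}+w^{\sigma^2}_{vu})-2M$; if WLOG $\min=w^{\sigma^2}_{uv}\le M$, then $\bar w^{\sigma^2}_{uv}+\bar w^{\sigma^2}_{vu}=w^{\sigma^2}_{vu}-w^{\sigma^2}_{uv}\ge(w^{\sigma^2}_{uv}+w^{\sigma^2}_{vu})-2M$. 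In both cases the lower bound becomes at least $(3/2-2/10)\binom{|U|-2}{k-2}/3^{k-1}$, comfortably exceeding the required $(1-2/10)\binom{|U|-2}{k-2}/3^{k-1}$. (The case $k=2$ is immediate since fragility of the sole constraint $\set{u,v}$ already forces $w^{\sigma^2}_{uv}+w^{\sigma^2}_{vu}=1$.) The main obstacle I anticipate is the restricted weak-fragility step: after intersecting Lemma~\ref{lem:fragile}'s case analysis with the constraint $Q\ni u$, one must verify that case~(b) collapses to the sub-case ``$u$ is the between vertex'' (otherwise $u$ would lie outside $(p,p')$, a contradiction), so that the three families $B', L, R$ remain disjoint and their binomials add without overcounting.
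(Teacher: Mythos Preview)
Your proof is correct and follows essentially the same approach as the paper's. Both arguments identify the same three families of constraints $S\supseteq\{u,v\}$ (those with $S\setminus\{u,v\}$ entirely between, entirely to the left, or entirely to the right of $\{u,v\}$) and use weak fragility to show each contributes at least $1$ to $w_{uv}^{\sigma^2}+w_{vu}^{\sigma^2}$. The only organizational difference is that the paper case-splits on whether the ``between'' region has at least $|U|/3$ vertices and then uses a single family, whereas you sum all three binomials and apply pigeonhole; this buys you a slightly larger constant ($\tfrac{3}{2\cdot 3^{k-1}}$ versus the paper's $\tfrac{1}{3^{k-1}}$), but it is not needed. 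Your explicit treatment of the $\bar w$ subtraction step is also a bit more detailed than the paper's one-line ``follows from the definition of $\bar w$'', but again the content is the same.
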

\begin{proof}
We prove the more interesting lower-bound and leave the straightforward proof of the upper bound to the reader. Fix $u,v \in U$. We consider two cases.

If there are at least $|U|/3$ vertices between $u$ and $v$ in $\sigma^2$ then we note that by weak fragility every constraint $S \supseteq \set{u,v}$ with all vertices in $S$ between $u$ and $v$ in $\sigma^2$ contributes at least $1$ to $w_{uv}+w_{vu}$. Therefore $w_{uv}+w_{vu} \ge \binom{|U|/3}{k-2} \ge \frac{1}{2 \cdot 3^{k-2}}\binom{n-2}{k-2}$ for sufficiently large $n$ and small $\epsilon$.

If there are at most $|U|/3$ vertices between $u$ and $v$ in $\sigma^2$ then consider constraints with all their vertices either all before or all after $u$ and $v$. We note that by weak fragility each such constraint $S \supseteq \set{u,v}$ contributes at least $1$ to $w_{uv}+w_{vu}$. There are clearly either at least $|U|/3$ vertices  before or at least $|U|/3$ vertices after, hence at least $\binom{|U|/3}{k-2} \ge \frac{1}{2 \cdot 3^{k-2}} \binom{n-2}{k-2}$ constraints for sufficiently large $n$ and small $\epsilon$.

We conclude that $w_{uv} + w_{vu} \ge \frac{1}{2 \cdot 3^{k-2}}\binom{n-2}{k-2} \ge \frac{1}{3^{k-1}}\binom{n-2}{k-2}$. The Lemma follows from the definition of $\bar w$.
\end{proof}

\begin{lemma}\label{lem:footInv}
Assume ranking $\pi$ and ordering $\sigma$ satisfy $|\pi(u) - \sigma(u)| =O( \epsilon n)$ for all $u$. 
For any $u,v$, let $N_{uv}$ denote the number of $S \supset \set{u,v}$ such that not all pairs $\set{s,t} \ne \set{u,v}$ are in the same order in $\sigma$ and $\pi$. We have $N_{uv} =O( \epsilon n^{k-2})$.
\end{lemma}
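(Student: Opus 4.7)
The plan is to bound $N_{uv}$ by a union bound over the ``bad'' pairs $\{s,t\} \subset S$ that form a $\sigma/\pi$-inversion. The key geometric observation is that the hypothesis $|\pi(w)-\sigma(w)| = O(\epsilon n)$ severely restricts which pairs can be $\sigma/\pi$-inversions: if $\{s,t\}$ is such an inversion, then WLOG $\pi(s) < \pi(t)$ while $\sigma(s) > \sigma(t)$, so
\[
\pi(t) - \pi(s) \;<\; \bigl(\sigma(s) - \pi(s)\bigr) - \bigl(\sigma(t) - \pi(t)\bigr) \;\le\; 2 \cdot O(\epsilon n),
\]
hence every $\sigma/\pi$-inversion $\{s,t\}$ satisfies $|\pi(s)-\pi(t)| = O(\epsilon n)$.

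Since $\pi$ is a ranking, each vertex $w$ participates in at most $O(\epsilon n)$ such inversions (its $\pi$-neighbors within an $O(\epsilon n)$ window). Summing gives a total of $O(\epsilon n^2)$ $\sigma/\pi$-inversions overall, and $O(\epsilon n)$ inversions containing any specified vertex.

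Now I fix the distinguished pair $\{u,v\}$ and union-bound over the forbidden inversion $\{s,t\} \ne \{u,v\}$ living inside $S$, splitting by $|\{s,t\}\cap\{u,v\}|$. If $\{s,t\}\cap\{u,v\}=\emptyset$, then $S$ must contain the four distinct vertices $\{u,v,s,t\}$, so there are $\binom{n-4}{k-4}$ such $S$ and at most $O(\epsilon n^2)$ such inversions, contributing $O(\epsilon n^{k-2})$. If instead $|\{s,t\}\cap\{u,v\}|=1$, then $|\{u,v\}\cup\{s,t\}|=3$, giving $\binom{n-3}{k-3}$ completions of $S$, and there are only $O(\epsilon n)$ such inversions (they involve $u$ or $v$), contributing $O(\epsilon n^{k-2})$ as well. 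Summing yields $N_{uv} = O(\epsilon n^{k-2})$.

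The only mild subtlety is the small-$k$ edge cases: for $k=2$ there is no room for a second pair in $S$, so $N_{uv}=0$ trivially; for $k=3$ only the second case occurs and gives $O(\epsilon n) = O(\epsilon n^{k-2})$; for $k \ge 4$ both terms contribute and are of the claimed order. I do not anticipate a serious obstacle here — the main point is simply to package the two elementary facts (inversions are $\pi$-local, hence few; and each fixes four vertices of $S$) into the stated bound.
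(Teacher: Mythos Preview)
Your proof is correct and follows essentially the same approach as the paper: the paper's proof is a one-line sketch (``such a pair $\{s,t\}$ must satisfy $|\pi(s)-\pi(t)| = 2\cdot O(\epsilon n)$, but few constraints contain such a pair''), and you have simply filled in the details, including the case split on $|\{s,t\}\cap\{u,v\}|$ that makes the counting explicit.
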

\begin{proof}
Such a pair $\set{s,t}$ must satisfy $|\pi(s) - \pi(t)| =2\cdot O(\epsilon n)$, but few constraints contain such a pair.
\end{proof}

\begin{lemma}\label{lem:w} The following inequalities hold:
\begin{enumerate}
\item $w_{uv}^{\sigma^2} \le w_{uv}^{\pi^\circledast} + O(\epsilon n^{k-2})$ 
\item $\bar w_{uv}^{\sigma^2} \le (1+O(\epsilon)) w_{uv}^{\pi^\circledast}$
\end{enumerate}
\end{lemma}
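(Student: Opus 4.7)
The plan is to handle both parts with a combination of constraint-by-constraint comparison and a threshold argument leveraging the cancellation in $\bar w$.

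For Part 1, I would fix $u,v \in U$ and bound $w_{uv}^{\sigma^2} - w_{uv}^{\pi^\circledast}$ by the number of constraints $S \supseteq \set{u,v}$ whose induced ranking under the $w_{uv}$-construction differs between $\sigma = \sigma^2$ and $\sigma = \pi^\circledast$ (each constraint contributes $0$ or $1$ to each weight). The natural case split is on the relative order of $u$ and $v$ in the two orderings. When $\set{u,v}$ is oriented the same way in both, the construction takes the same branch in both, so the induced ranking of $S$ depends only on the relative orders of pairs in $S$ (or $S \sm \set{v}$, in the ``move $v$'' branch). Lemma~\ref{lem:piTwo} then supplies $|\sigma^2(x) - \pi^\circledast(x)| = O(\epsilon n)$, and Lemma~\ref{lem:footInv} bounds the number of $S$ in which any pair $\set{s,t} \ne \set{u,v}$ is reversed by $N_{uv} = O(\epsilon n^{k-2})$, which is enough.

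The subtle sub-case of Part 1 is when $\set{u,v}$ is inverted between $\sigma^2$ and $\pi^\circledast$, so one instance of the construction leaves $v$ in place while the other moves $v$ just before $u$. The first step is to observe that Lemma~\ref{lem:piTwo} combined with the inversion forces $|\pi^\circledast(u)-\pi^\circledast(v)| = O(\epsilon n)$ by the triangle inequality, so only $O(\epsilon n)$ vertices of $U$ lie strictly between $u$ and $v$ in the ranking $\pi^\circledast$. The second step is a direct check: if all pair orders in $S$ agree except $\set{u,v}$ itself and additionally $u,v$ are adjacent in $\pi^\circledast|_S$, then $\sigma^2|_S$ and $\pi^\circledast|_S$-with-$v$-moved produce the same ranking (both place $v$ immediately before $u$, with the remaining $S$-vertices in their common relative order). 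So the only extra bad constraints in this sub-case are those containing at least one of the $O(\epsilon n)$ between-vertices of $\pi^\circledast$, contributing at most $O(\epsilon n)\binom{|U|-3}{k-3} = O(\epsilon n^{k-2})$. This completes Part 1; the argument is symmetric in $\sigma^2$ and $\pi^\circledast$, so the reverse inequality $w_{uv}^{\pi^\circledast} \le w_{uv}^{\sigma^2} + O(\epsilon n^{k-2})$ also holds, which I will use in Part 2.

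For Part 2, the plan is a threshold argument on $w_{uv}^{\pi^\circledast}$ with cutoff $\mathrm{cap}/2$, where $\mathrm{cap} = \frac{1}{10 \cdot 3^{k-1}}\binom{|U|-2}{k-2}$. Using Lemma~\ref{lem:wBarFAST} plus $w \ge \bar w$, we already have $w_{uv}^{\sigma^2} + w_{vu}^{\sigma^2} = \Omega(n^{k-2})$, and Part 1 (in both directions) transports this to $w_{uv}^{\pi^\circledast} + w_{vu}^{\pi^\circledast} = \Omega(n^{k-2})$ with a constant strictly larger than $\mathrm{cap}$. In the large regime $w_{uv}^{\pi^\circledast} \ge \mathrm{cap}/2 = \Omega(n^{k-2})$, the additive $O(\epsilon n^{k-2})$ slack from Part 1 is absorbed multiplicatively: $w_{uv}^{\sigma^2} \le (1+O(\epsilon)) w_{uv}^{\pi^\circledast}$, and $\bar w_{uv}^{\sigma^2} \le w_{uv}^{\sigma^2}$ finishes. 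In the small regime $w_{uv}^{\pi^\circledast} < \mathrm{cap}/2$, the sum bound forces $w_{vu}^{\pi^\circledast} = \Omega(n^{k-2})$, hence by Part 1 (reverse) $w_{vu}^{\sigma^2} = \Omega(n^{k-2}) > w_{uv}^{\sigma^2}$, while Part 1 (forward) gives $w_{uv}^{\sigma^2} < \mathrm{cap}$ for $\epsilon$ sufficiently small; therefore $\min(\mathrm{cap}, w_{uv}^{\sigma^2}, w_{vu}^{\sigma^2}) = w_{uv}^{\sigma^2}$, the cancellation consumes $w_{uv}^{\sigma^2}$ entirely, and $\bar w_{uv}^{\sigma^2} = 0 \le (1+O(\epsilon))w_{uv}^{\pi^\circledast}$.

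The main obstacle is the inverted-pair sub-case of Part 1: naively, swapping the order of $\set{u,v}$ switches the branch of the $w_{uv}$-construction, and a priori all $\binom{|U|-2}{k-2}$ constraints through $\set{u,v}$ could behave differently, which would ruin the $O(\epsilon n^{k-2})$ bound. The structural observation that rescues the argument is that the two branches in fact produce literally the same ranking whenever $u$ and $v$ are adjacent in $\pi^\circledast|_S$, which together with the proximity $|\pi^\circledast(u)-\pi^\circledast(v)| = O(\epsilon n)$ restricts the bad contribution to the thin ``between'' region of size $O(\epsilon n^{k-2})$.
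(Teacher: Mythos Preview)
Your proof is correct and follows the same overall strategy as the paper: for Part~1, compare the two weights constraint by constraint and invoke Lemmas~\ref{lem:piTwo} and~\ref{lem:footInv}; for Part~2, a threshold argument exploiting the cancellation in~$\bar w$.

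The difference lies in the ``inverted-pair sub-case'' of Part~1, which you flag as the main obstacle. The paper does not separate this case at all, because the hypothesis you add (``additionally $u,v$ are adjacent in $\pi^\circledast|_S$'') is automatically implied by ``all pair orders in $S$ agree except $\{u,v\}$.'' Indeed, if some $w \in S \setminus \{u,v\}$ satisfied $\pi^\circledast(u) < \pi^\circledast(w) < \pi^\circledast(v)$, then the non-inversion of $\{u,w\}$ and $\{v,w\}$ would force $\sigma^2(u) < \sigma^2(w) < \sigma^2(v)$, contradicting the assumed $\{u,v\}$-inversion. Thus whenever $\{u,v\}$ is the \emph{only} inversion in $S$, the vertices $u,v$ are adjacent in both $\sigma^2|_S$ and $\pi^\circledast|_S$, the two branches of the $w_{uv}$-construction produce the same ranking, and the between-vertex term you bound is in fact zero. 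The paper therefore dispatches Part~1 with a single application of Lemma~\ref{lem:footInv}. Your route is longer but still sound; the transitivity observation simply dissolves the obstacle you identified.

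For Part~2 the arguments are essentially identical up to the choice of threshold constant. One small remark: you appeal to the reverse inequality $w_{uv}^{\pi^\circledast} \le w_{uv}^{\sigma^2} + O(\epsilon n^{k-2})$ to transport the sum lower bound to $\pi^\circledast$, whereas the paper works directly with $w^{\sigma^2}$ throughout and uses Lemma~\ref{lem:wBarFAST} as stated; this avoids the (harmless) detour through the symmetric version of Part~1.
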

\begin{proof}
The only constraints $S \supset \set{u,v}$ that contribute differently to the left- and right-hand sides of the first part are those containing a $\set{s,t} \ne \set{u,v}$ that are a $\sigma^2$/$\pi^\circledast$-inversion. By Lemmas \ref{lem:piTwo} and \ref{lem:footInv} we can bound the number of such constraints by $O(\epsilon n^k)$, completing the proof of the first part.

If $ w_{uv}^{\pi^\circledast} \ge \frac{1}{2 \cdot 3^{k-1}}\binom{|U|-2}{k-2}$ the second part follows from the first part and the trivial fact $\bar w \le w$. Otherwise by the first part we have $w_{uv}^{\sigma^2} < 0.6 \frac{1}{3^{k-1}}\binom{|U|-2}{k-2}$. Therefore by Lemma \ref{lem:wBarFAST} $w_{vu}^{\sigma^2} > 0.2\frac{1}{3^{k-1}}\binom{|U|-2}{k-2}$ hence $\bar w_{uv}^{\sigma^2} = w_{uv}^{\sigma^2} - \min(0.1\frac{1}{3^{k-1}} \binom{|U|-2}{k-2}, w_{uv}^{\sigma^2}) = \min( w_{uv}^{\sigma^2} - 0.1 \frac{1}{3^{k-1}} \binom{|U|-2}{k-2}, 0) \le \min(w_{uv}^{\pi^\circledast}, 0) \le w_{uv}^{\pi^\circledast}$ using the first part of the Lemma in the penultimate inequality.
\end{proof}

\begin{lemma}\label{lem:fastCheap} $~$
\begin{enumerate}
\item 
$\bar C^{\sigma^2}(\pi^\circledast) \le (1+O(\epsilon)) \binom{k}{2} C(\pi^\circledast)$
\item
$\bar C^{\sigma^2}(\pi^3) \le (1+O(\epsilon)) \binom{k}{2} C(\pi^\circledast)$
\item
$\bar C^{\sigma^2}(\pi^3) - \bar C^{\sigma^2}(\pi^\circledast) = O(\epsilon C(\pi^\circledast))$
\end{enumerate}
\end{lemma}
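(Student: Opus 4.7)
The plan is to prove the three parts in sequence, with each part resting on the previous one together with a single outside ingredient. Part~1 is essentially an edge-by-edge invocation of Lemma~\ref{lem:w} part~2; parts~2 and~3 follow from Theorem~\ref{thm:fastPTAS} plus part~1.

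For part~1, observe that both $\bar C^{\sigma^2}(\pi^\circledast)$ and $C^{\pi^\circledast}(\pi^\circledast)$ are sums over exactly the same index set, namely the pairs $(u,v) \in U \times U$ with $\pi^\circledast(u) > \pi^\circledast(v)$. Hence summing the pointwise bound $\bar w_{uv}^{\sigma^2} \le (1+O(\epsilon))\,w_{uv}^{\pi^\circledast}$ from Lemma~\ref{lem:w} part~2 over this index set gives $\bar C^{\sigma^2}(\pi^\circledast) \le (1+O(\epsilon))\,C^{\pi^\circledast}(\pi^\circledast)$, and Lemma~\ref{lem:objEq} part~1 rewrites the right-hand side as $(1+O(\epsilon))\binom{k}{2}C(\pi^\circledast)$.

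For part~2, Lemma~\ref{lem:wBarFAST} tells us that $\bar w^{\sigma^2}$ is a weighted \fast{} instance with $\beta/\alpha$ bounded by a constant depending only on $k$, so Theorem~\ref{thm:fastPTAS} applies and the PTAS returns a ranking $\pi^{\mathrm{pre}}$ with $\bar C^{\sigma^2}(\pi^{\mathrm{pre}}) \le (1+\epsilon)\,\min_\pi \bar C^{\sigma^2}(\pi)$. The subsequent single-vertex-move loop can only reduce the FAST objective, so $\bar C^{\sigma^2}(\pi^3) \le \bar C^{\sigma^2}(\pi^{\mathrm{pre}})$. Chaining this with the trivial bound $\min_\pi \bar C^{\sigma^2}(\pi) \le \bar C^{\sigma^2}(\pi^\circledast)$ and then invoking part~1 yields $\bar C^{\sigma^2}(\pi^3) \le (1+\epsilon)(1+O(\epsilon))\binom{k}{2}C(\pi^\circledast) = (1+O(\epsilon))\binom{k}{2}C(\pi^\circledast)$.

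For part~3, the same chain gives the sharper statement by separating the multiplicative slack in the PTAS from the multiplicative slack in Lemma~\ref{lem:w}: $\bar C^{\sigma^2}(\pi^3) - \bar C^{\sigma^2}(\pi^\circledast) \le \bar C^{\sigma^2}(\pi^{\mathrm{pre}}) - \min_\pi \bar C^{\sigma^2}(\pi) \le \epsilon\,\min_\pi \bar C^{\sigma^2}(\pi) \le \epsilon\,\bar C^{\sigma^2}(\pi^\circledast)$, and then part~1 bounds $\bar C^{\sigma^2}(\pi^\circledast)$ by $(1+O(\epsilon))\binom{k}{2}C(\pi^\circledast)$, yielding $O(\epsilon\,C(\pi^\circledast))$ since $\binom{k}{2}$ is absorbed into the $O(\cdot)$ convention that hides $k$. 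There is no real obstacle: all of the conceptual work was done upstream, in Lemma~\ref{lem:w} part~2 (which is what makes the FAST-level multiplicative guarantee transfer to the original ranking CSP without incurring an additive $\epsilon n^k$ loss) and in Lemma~\ref{lem:wBarFAST} (which is what makes the FAST PTAS applicable in the first place). The only detail to verify is that the local-search cleanup does not damage the approximation ratio, which is immediate since each accepted move strictly decreases the $\bar w^{\sigma^2}$-objective.
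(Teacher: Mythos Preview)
Your proof is correct and follows essentially the same route as the paper's: part~1 via Lemma~\ref{lem:w} part~2 summed over pairs and then Lemma~\ref{lem:objEq}, part~2 via the FAST PTAS guarantee chained with part~1, and part~3 by isolating the $\epsilon$-slack of the PTAS and bounding it with part~1. You are slightly more explicit than the paper (you cite Lemma~\ref{lem:wBarFAST} to justify invoking Theorem~\ref{thm:fastPTAS}, and you note that the local-search pass cannot increase the $\bar w^{\sigma^2}$-objective), but these are exactly the details the paper leaves implicit.
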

\begin{proof}
From the second part of Lemma \ref{lem:w} and Lemma \ref{lem:objEq} we conclude that
\begin{equation*}
\bar C^{\sigma^2}(\pi^\circledast) \le (1+O(\epsilon)) C^{\pi^\circledast}(\pi^\circledast) = (1+O(\epsilon))\binom{k}{2} C(\pi^\circledast)
.\end{equation*}
proving the first part of this Lemma.

The PTAS for FAST guarantees
\begin{equation}
\bar C^{\sigma^2}(\pi^3) \le (1+O(\epsilon)) \bar C^{\sigma^2}(\pi^\circledast)  \label{eqn:fcOne}
,\end{equation}
which combined with the first part  of this Lemma yields the second part.

Finally the first part of Lemma \ref{lem:w} followed by the first part of this Lemma imply
\[
\bar C^{\sigma^2}(\pi^3) - \bar C^{\sigma^2}(\pi^\circledast) \le O(\epsilon) C^{\sigma^2}(\pi^\circledast) \le O(\epsilon C(\pi^\circledast))
,\]
completing the proof of the third part of this Lemma.
\end{proof}

\begin{lemma}\label{lem:Dsmall}
$d(\pi^3,\pi^\circledast)= O(C(\pi^\circledast) / n^{k-2})$
\end{lemma}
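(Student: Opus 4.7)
The plan is to express $\bar C^{\sigma^2}(\pi^3) - \bar C^{\sigma^2}(\pi^\circledast)$ as a signed sum over the inversions $I := Inv(\pi^3,\pi^\circledast)$ and argue that a typical inversion contributes $\Omega(n^{k-2})$ to this gap. For each $\{u,v\}\in I$ oriented so $\pi^\circledast(u) < \pi^\circledast(v)$, the pair's contribution is $\bar\Delta^{\sigma^2}_{uv} := \bar w^{\sigma^2}_{uv} - \bar w^{\sigma^2}_{vu}$; since the cancellation defining $\bar w$ subtracts the same amount from both directions, $\bar\Delta^{\sigma^2}_{uv} = \Delta^{\sigma^2}_{uv} := w^{\sigma^2}_{uv} - w^{\sigma^2}_{vu}$. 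Lemma~\ref{lem:fastCheap}(3) then gives $\sum_{\{u,v\} \in I} \bar\Delta^{\sigma^2}_{uv} = O(\epsilon C(\pi^\circledast))$.

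Second, I classify pairs $\{u,v\}$ with $\pi^\circledast(u)<\pi^\circledast(v)$ as \emph{good} if $\Delta^{\pi^\circledast}_{uv}$ is at least half the weak-fragility pair-weight lower bound (the proof of Lemma~\ref{lem:wBarFAST}, applied to the ordering $\pi^\circledast$ in place of $\sigma^2$, yields $w^{\pi^\circledast}_{uv}+w^{\pi^\circledast}_{vu} = \Omega(n^{k-2})$) and \emph{bad} otherwise. Inspection of the definition of $w^\sigma$ shows $w^{\pi^\circledast}_{vu}$ equals the number of constraints $S\supseteq\{u,v\}$ unsatisfied in $\pi^\circledast$. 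A bad pair forces $w^{\pi^\circledast}_{vu}=\Omega(n^{k-2})$, and since $\sum_{(u,v):\pi^\circledast(u)<\pi^\circledast(v)} w^{\pi^\circledast}_{vu} = \binom{k}{2}C(\pi^\circledast)$, the number of bad pairs is $O(C(\pi^\circledast)/n^{k-2})$. Meanwhile, because the proof of Lemma~\ref{lem:w}(1) is direction-agnostic (passing through Lemma~\ref{lem:footInv}), the symmetric estimate $|\Delta^{\sigma^2}_{uv}-\Delta^{\pi^\circledast}_{uv}| = O(\epsilon n^{k-2})$ holds; hence every good pair has $\bar\Delta^{\sigma^2}_{uv} = \Omega(n^{k-2})$ for sufficiently small $\epsilon$.

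Finally I combine: bad inverted pairs contribute at most $O(C(\pi^\circledast)/n^{k-2})\cdot O(n^{k-2}) = O(C(\pi^\circledast))$ in absolute value, using $|\bar\Delta^{\sigma^2}_{uv}| \le O(n^{k-2})$, while good inverted pairs each contribute at least $\Omega(n^{k-2})$. Thus
\[
\Omega(n^{k-2})\cdot|I_{\text{good}}| \le O(\epsilon C(\pi^\circledast))+O(C(\pi^\circledast)) = O(C(\pi^\circledast)),
\]
giving $|I_{\text{good}}| = O(C(\pi^\circledast)/n^{k-2})$ and therefore $d(\pi^3,\pi^\circledast) = |I_{\text{good}}|+|I_{\text{bad}}| = O(C(\pi^\circledast)/n^{k-2})$. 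The main technical step is establishing the two auxiliary facts transferred from $\sigma^2$ to $\pi^\circledast$: the pair-weight lower bound $w^{\pi^\circledast}_{uv}+w^{\pi^\circledast}_{vu} = \Omega(n^{k-2})$ (a direct re-run of Lemma~\ref{lem:wBarFAST}'s argument before cancellation), and the symmetric two-sided error estimate for $\Delta^{\sigma^2}-\Delta^{\pi^\circledast}$. Interestingly, the local-optimality step in Algorithm~\ref{alg:main} is not needed for this lemma; Lemma~\ref{lem:fastCheap}(3) alone suffices.
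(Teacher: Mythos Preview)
Your argument is correct, but it is considerably more elaborate than the paper's. The paper's proof is essentially one line: by Lemma~\ref{lem:fastCheap} (parts~1 and~2) both $\pi^3$ and $\pi^\circledast$ have $\bar C^{\sigma^2}$-cost $O(C(\pi^\circledast))$, and by Lemma~\ref{lem:wBarFAST} every pair satisfies $\bar w^{\sigma^2}_{uv}+\bar w^{\sigma^2}_{vu}\ge\Omega(n^{k-2})$. Now observe that for each $\pi^3/\pi^\circledast$-inversion $\{u,v\}$, one of the two rankings pays $\bar w^{\sigma^2}_{uv}$ and the other pays $\bar w^{\sigma^2}_{vu}$, so together they pay at least $\Omega(n^{k-2})$ on that pair; summing, $d(\pi^3,\pi^\circledast)\cdot\Omega(n^{k-2})\le \bar C^{\sigma^2}(\pi^3)+\bar C^{\sigma^2}(\pi^\circledast)=O(C(\pi^\circledast))$.

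The key difference is that the paper works with the \emph{sum} of the two costs rather than the \emph{difference}. Using the sum, every inversion automatically contributes $\bar w^{\sigma^2}_{uv}+\bar w^{\sigma^2}_{vu}$, which Lemma~\ref{lem:wBarFAST} already lower-bounds for the actual instance $\bar w^{\sigma^2}$---no sign issues, no good/bad split, no need to transfer the pair-weight bound to $\pi^\circledast$, and no need for the symmetric two-sided version of Lemma~\ref{lem:w}(1). Your difference-based route via Lemma~\ref{lem:fastCheap}(3) works, but all the auxiliary machinery you introduce (the good/bad classification, the re-run of Lemma~\ref{lem:wBarFAST} for $\pi^\circledast$, and the symmetrized error estimate on $\Delta$) exists solely to cope with the fact that individual $\bar\Delta^{\sigma^2}_{uv}$ can be negative. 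The sum trick sidesteps this entirely. Your closing remark that local optimality is not needed here is correct, and equally applies to the paper's argument.
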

\begin{proof}
$\pi^3$ and $\pi^\circledast$ both have cost at most $2 C(\pi^\circledast)$ (Lemma \ref{lem:fastCheap}, first and second parts) for the FAST instance $\bar w^{\sigma^2}$ (Lemma \ref{lem:wBarFAST}).
\end{proof}

\begin{lemma}\label{lem:23close}
We have
$|\pi^3(v) - \pi^\circledast(v)| = O(\epsilon n)$ for all $v \in U$.
\end{lemma}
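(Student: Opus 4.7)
The plan is to argue by the triangle inequality and contradiction with the local optimality of $\pi^3$ with respect to $\bar w^{\sigma^2}$. By Lemma~\ref{lem:piTwo}, $|\sigma^2(v) - \pi^\circledast(v)| = O(\epsilon n)$, so it suffices to show $|\pi^3(v) - \sigma^2(v)| = O(\epsilon n)$ for every $v \in U$. Suppose toward contradiction that some $v \in U$ satisfies $\pi^3(v) > \sigma^2(v) + C\epsilon n$ for a sufficiently large constant $C$ (the other direction is symmetric). I would aim to show that moving $v$ in $\pi^3$ to a position $p$ within $O(\epsilon n)$ of $\sigma^2(v)$ strictly decreases the $\bar w^{\sigma^2}$-cost, contradicting local optimality. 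Since the cancellation defining $\bar w$ is symmetric in $u,v$, single-vertex-move cost-changes are identical under $\bar w^{\sigma^2}$ and $w^{\sigma^2}$, so I work with $w^{\sigma^2}$ throughout.

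The technical heart is translating the $w^{\sigma^2}$-cost landscape around $v$ in $\pi^3$ to the CSP-cost landscape around $v$ in $\sigma^2$. The anchor is Lemma~\ref{lem:objEq}(2), which gives $b^{w^{\sigma^2}}(\sigma^2, v, \sigma^2(v)) = (k-1) b(\sigma^2, v, \sigma^2(v))$. To extend this identity to the two test points $q \in \{p, \pi^3(v)\}$, I would apply Lemma~\ref{lem:bChangeFirst}(1) (or its direct analogue for the weighted FAS $w^{\sigma^2}$) twice, using $d(\pi^3, \sigma^2) = O(\epsilon n^2)$. The latter follows from Lemma~\ref{lem:Dsmall} (giving $d(\pi^3, \pi^\circledast) = O(\epsilon^4 n^2)$), the observation that any $\sigma^2/\pi^\circledast$-inversion requires endpoints within $O(\epsilon n)$ ranks by Lemma~\ref{lem:piTwo}, and the triangle inequality for Kendall-Tau. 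Choosing $p$ as an integer position near $\sigma^2(v)$ with few $\sigma^2/\pi^3$-crossings (via averaging against the Footrule-Kendall-Tau inequality $\sum_p \mathrm{crossings}(p) \le 2 d(\pi^3, \sigma^2)$) then yields $b^{w^{\sigma^2}}(\pi^3, v, q) = (k-1) b(\sigma^2, v, q) \pm O(\epsilon n^{k-1})$ for both test points.

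Fragility closes the loop. Lemma~\ref{lem:fragile} applied at $\sigma^2$ with positions $\sigma^2(v)$ and $\pi^3(v)$ --- separated by $\Omega(C\epsilon n)$ vertices of $U$ --- gives $b(\sigma^2, v, \sigma^2(v)) + b(\sigma^2, v, \pi^3(v)) = \Omega(C\epsilon n^{k-1})$. The unambiguity of $v$ yields $b(\sigma^1, v, \sigma^2(v)) = O(\epsilon n^{k-1})$, which transfers to $b(\sigma^2, v, \sigma^2(v)) = O(\epsilon n^{k-1})$ via Lemma~\ref{lem:bChangeFirst}(1), using $|V \setminus U| = O(\epsilon^3 n)$ from Lemma~\ref{lem:Usmall} together with a suitable Kendall-Tau bound between $\sigma^1|_U$ and $\sigma^2$. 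Hence $b(\sigma^2, v, \pi^3(v)) = \Omega(C\epsilon n^{k-1})$, while $b(\sigma^2, v, p) = O(\epsilon n^{k-1})$ because $|p - \sigma^2(v)| = O(\epsilon n)$ (apply Lemma~\ref{lem:bChangeFirst}(1) once more). Plugging both estimates into the translation equations contradicts the local optimality inequality $b^{w^{\sigma^2}}(\pi^3, v, p) \ge b^{w^{\sigma^2}}(\pi^3, v, \pi^3(v))$ once $C$ is chosen large enough.

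The main obstacle will be keeping the additive error in all of the translations below the $\Omega(C\epsilon n^{k-1})$ signal produced by fragility. The $\sqrt{d}$ form of Corollary~\ref{lem:fastSVMlandscape} yields only $O(\sqrt{\epsilon}\, n^{k-1})$ and is therefore too loose; the argument relies crucially on the tighter crossings-and-$d$ form of Lemma~\ref{lem:bChangeFirst}(1), together with the averaging step that selects a test position $p$ with few $\sigma^2/\pi^3$-crossings.
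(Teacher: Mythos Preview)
Your plan has a real gap at the step where you claim
\[
b^{w^{\sigma^2}}(\pi^3, v, q) \;=\; (k-1)\, b(\sigma^2, v, q) \;\pm\; O(\epsilon n^{k-1})
\]
for the test point $q=\pi^3(v)$. Lemma~\ref{lem:objEq}(2) only gives the identity $b^{w^{\sigma}}(\sigma,v,\sigma(v))=(k-1)b(\sigma,v,\sigma(v))$ at the \emph{diagonal} point; it says nothing about $b^{w^{\sigma}}(\sigma,v,q)$ versus $(k-1)b(\sigma,v,q)$ when $q\neq\sigma(v)$. Lemma~\ref{lem:bChangeFirst} cannot bridge this either: it compares two values of $b$ for the \emph{same} constraint system, whereas here you need to relate two different systems ($w^{\sigma^2}$ and $c$) at an off-diagonal point. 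Concretely, a constraint $S\ni v$ with some vertex of $S$ lying between $\sigma^2(v)$ and $q$ can contribute differently to $b^{w^{\sigma^2}}(\cdot,v,q)$ and to $(k-1)b(\sigma^2,v,q)$; there are $\Theta(|q-\sigma^2(v)|\,n^{k-2})$ such constraints, so at $q=\pi^3(v)$ the discrepancy is $\Theta(C\epsilon n^{k-1})$ --- the same order as your fragility signal. The contradiction therefore does not close.

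The paper sidesteps this by never leaving the FAS world for the ``signal'' step. Lemma~\ref{lem:wBarFAST} gives a uniform lower bound $\bar w^{\sigma^2}_{uv}+\bar w^{\sigma^2}_{vu}\ge c\binom{|U|-2}{k-2}$, which immediately yields
\[
b^{\bar w}(\pi^3,v,\pi^3(v))+b^{\bar w}(\pi^3,v,\pi^\circledast(v)+\tfrac12)\;\ge\;c\,|\pi^3(v)-\pi^\circledast(v)|\,\binom{|U|-2}{k-2},
\]
and local optimality reduces the task to bounding $b^{\bar w}(\pi^3,v,\pi^\circledast(v)+\tfrac12)$. The second key point is that the paper compares $\pi^3$ with $\pi^\circledast$, not with $\sigma^2$: by Lemma~\ref{lem:Dsmall} and $OPT\le\epsilon^4 n^k$ one has $d(\pi^3,\pi^\circledast)=O(\epsilon^4 n^2)$, so Corollary~\ref{lem:fastSVMlandscape} gives error $O(\epsilon^2 n)\cdot\binom{|U|-2}{k-2}=O(\epsilon^2 n^{k-1})$, comfortably below the signal. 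Your dismissal of that corollary as ``too loose'' stemmed from pairing $\pi^3$ with $\sigma^2$ (where $d$ is only $O(\epsilon n^2)$) rather than with $\pi^\circledast$. Only at the very end, at the single diagonal point $\pi^\circledast(v)$, does the paper invoke Lemmas~\ref{lem:w} and~\ref{lem:objEq} to pass from $\bar w^{\sigma^2}$ to the CSP cost.
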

\begin{proof}
Fix $v \in U$. In this proof we write $w$ (resp. $\bar w$) as a short-hand for $w^{\sigma^2}$ (resp. $\bar w^{\sigma^2}$). Observe that there are at least $(|\pi^3(v) - \pi^\circledast(v)|-1)$ vertices between $\pi^3(v)$ and $\pi^\circledast(v) + 1/2$ in $\pi^3$. Any such vertex $u$ must contribute $w_{uv}$ to one of $b^{\bar w}(\pi^3, v, \pi^\circledast(v)+ 1/2)$ and $b^{\bar w}(\pi^3,v,\pi^3(v))$ and contribute $w_{vu}$ to the other.
By Lemma \ref{lem:wBarFAST} and local optimality of $\pi^3$ we have
\begin{align*}
(|\pi^3(v) - \pi^\circledast(v)|-1)\frac{(1-2/10)}{3^{k-1}} \binom{|U|-2}{k-2} &\le b^{\bar w}(\pi^3, v, \pi^\circledast(v)+ 1/2) + b^{\bar w}(\pi^3,v,\pi^3(v)) \\
& \le 2b^{\bar w}(\pi^3, v, \pi^\circledast(v)+ 1/2)
.\end{align*}

Now apply Corollary \ref{lem:fastSVMlandscape}
\begin{align*}
b^{\bar w}(\pi^3, v, \pi^\circledast(v)+ 1/2) &\le b^{\bar w}(\pi^\circledast, v, \pi^\circledast(v)) + 2 \sqrt{d(\pi^\circledast, \pi^3)} 2 \binom{|U|-2}{k-2}
\end{align*}
and then recall $\sqrt{d(\pi^\circledast, \pi^3)}=O(\epsilon n)$ by Lemma \ref{lem:Dsmall} and the assumption that $OPT$ is small.

Next
\begin{align}
b^{\bar w}(\pi^\circledast, v, \pi^\circledast(v)) 
&\le (1+O(\epsilon)) b^{w^{\pi^{\circledast}}}(\pi^\circledast, v, \pi^\circledast(v)) && \text{(Second part of Lemma \ref{lem:w})} \notag \\
& = (1+O(\epsilon))b(\pi^\circledast, v, \pi^\circledast(v)) && \text{(Lemma \ref{lem:objEq})} \label{eqn:hello}
\end{align}
Finally
\begin{align*}
b(\pi^\circledast, v, \pi^\circledast(v)) &\le b(\sigma^1, v, \sigma^2(v)) + O(n^{k-2}(\epsilon n + \sqrt{\epsilon^2 n^2})) && \text{(Lemmas \ref{lem:bChangeFirst}, \ref{lem:piOne} and \ref{lem:piTwo})} \\
&= O(\epsilon n^{k-1}) && \text{($v \in U$)}
.\end{align*}
which completes the proof of the Lemma.
\end{proof}

\begin{lemma}\label{lem:piThree}
$C(\pi^3) \le (1+O(\epsilon)) C(\pi^\circledast)$.
\end{lemma}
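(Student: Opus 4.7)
The plan is to morph $\pi^\circledast$ into $\pi^3$ through adjacent swaps and track how the CSP cost $C(\cdot)$ changes compared to how the FAS cost $C^{\sigma^2}(\cdot)$ changes. Fix any bubble-sort path $\pi^\circledast = \pi^{(0)}, \pi^{(1)}, \ldots, \pi^{(d)} = \pi^3$ where $d = d(\pi^\circledast, \pi^3)$ and each step is a single adjacent swap. Because in bubble sort each vertex's position moves monotonically from its start value to its target value, the triangle inequality applied to Lemmas~\ref{lem:piTwo} and~\ref{lem:23close} gives the uniform bound $|\pi^{(i)}(v) - \sigma^2(v)| = O(\epsilon n)$ for every $i$ and every $v$.

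Next I analyze one such swap: suppose $u$ occupies position $q$ and $v$ position $q+1$ in $\pi^{(i)}$, and the two are exchanged in $\pi^{(i+1)}$. Only constraints $S \supset \set{u,v}$ change violation status. Unwinding the definition of $w^{\pi^{(i)}}$ using adjacency, the ordering that defines $w_{uv}^{\pi^{(i)}}$ restricted to $S$ coincides with $\pi^{(i+1)}|_S$, and the one defining $w_{vu}^{\pi^{(i)}}$ restricted to $S$ coincides with $\pi^{(i)}|_S$, so
\[
C(\pi^{(i+1)}) - C(\pi^{(i)}) = w_{uv}^{\pi^{(i)}} - w_{vu}^{\pi^{(i)}}.
\]
A direct computation on the FAS side likewise gives $C^{\sigma^2}(\pi^{(i+1)}) - C^{\sigma^2}(\pi^{(i)}) = w_{uv}^{\sigma^2} - w_{vu}^{\sigma^2}$. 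Lemma~\ref{lem:footInv} applied with $\pi = \pi^{(i)}$ and $\sigma = \sigma^2$ (legal by the first paragraph) bounds $|w_{xy}^{\pi^{(i)}} - w_{xy}^{\sigma^2}| = O(\epsilon n^{k-2})$ for $(x,y) \in \set{(u,v),(v,u)}$, so the per-swap slippage between $C$ and $C^{\sigma^2}$ is $O(\epsilon n^{k-2})$.

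Telescoping over the $d$ swaps and invoking Lemma~\ref{lem:Dsmall} ($d = O(C(\pi^\circledast)/n^{k-2})$) yields
\[
\bigl| (C(\pi^3) - C(\pi^\circledast)) - (C^{\sigma^2}(\pi^3) - C^{\sigma^2}(\pi^\circledast)) \bigr| = O(\epsilon\, C(\pi^\circledast)).
\]
The per-pair cancellation that defines $\bar w^{\sigma^2}$ from $w^{\sigma^2}$ is symmetric in $u,v$, so $C^{\sigma^2}(\pi) - \bar C^{\sigma^2}(\pi)$ does not depend on the ranking $\pi$; thus by the third part of Lemma~\ref{lem:fastCheap}, $C^{\sigma^2}(\pi^3) - C^{\sigma^2}(\pi^\circledast) = \bar C^{\sigma^2}(\pi^3) - \bar C^{\sigma^2}(\pi^\circledast) \le O(\epsilon\, C(\pi^\circledast))$. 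Adding the two bounds gives $C(\pi^3) - C(\pi^\circledast) \le O(\epsilon\, C(\pi^\circledast))$, which rearranges to the statement.

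The main obstacle is the per-swap slippage bound. A direct comparison of $C$ and $C^{\sigma^2}$ at the endpoints $\pi^\circledast$ and $\pi^3$ loses $O(\epsilon n^k)$, which swamps $C(\pi^\circledast)$ when $OPT$ is small. Breaking the morphing into $d$ adjacent swaps and using bubble-sort monotonicity to keep each intermediate $\pi^{(i)}$ within $O(\epsilon n)$ of $\sigma^2$ is exactly what lets Lemma~\ref{lem:footInv} apply uniformly and replaces a factor of $n^2$ in the error by $d = O(C(\pi^\circledast)/n^{k-2})$, which is the savings required to obtain a multiplicative $1+O(\epsilon)$ bound.
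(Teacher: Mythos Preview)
Your argument has the right shape but rests on a false claim: it is \emph{not} true that in bubble sort (or in any minimal adjacent-swap path) each vertex's position moves monotonically. For the reversal $3,2,1 \to 1,2,3$, element $2$ starts and ends at position $2$ yet must be swapped once with $3$ and once with $1$, so it necessarily visits position $1$ or $3$ along the way; no minimal path keeps its position monotone.

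Fortunately the conclusion you need, $|\pi^{(i)}(v)-\sigma^2(v)|=O(\epsilon n)$, is still true via a simpler argument that bypasses monotonicity. In any minimal path each $\pi^\circledast/\pi^3$-inversion is corrected exactly once, so vertex $v$ participates in at most $(\text{number of inversions containing }v)$ swaps, and hence $|\pi^{(i)}(v)-\pi^\circledast(v)|$ is bounded by that number. But Lemma~\ref{lem:23close} gives $|\pi^3(w)-\pi^\circledast(w)|=O(\epsilon n)$ for every $w$, which forces the two members of any inversion to lie within $O(\epsilon n)$ of each other in $\pi^\circledast$; thus each vertex is in at most $O(\epsilon n)$ inversions. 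Combined with Lemma~\ref{lem:piTwo} this yields the uniform bound you asserted, and the rest of your proof then goes through.

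For comparison, the paper avoids the path altogether. It observes directly that a constraint $S$ contributes identically to $C(\pi^3)-C(\pi^\circledast)$ and to $C^{\sigma^2}(\pi^3)-C^{\sigma^2}(\pi^\circledast)$ unless $S$ contains both a $\pi^3/\pi^\circledast$-inversion $\{u,v\}$ \emph{and} some other pair whose relative order is not the same across all three of $\pi^3,\pi^\circledast,\sigma^2$. Counting such $S$ via Lemmas~\ref{lem:footInv} and~\ref{lem:Dsmall} (applied once, to the endpoints only) gives the same $O(\epsilon\,C(\pi^\circledast))$ bound. Your telescoping over adjacent swaps is an equivalent bookkeeping of the same combinatorics; the paper's direct comparison is a bit shorter and sidesteps any need to control intermediate rankings.
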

\begin{proof}
First we claim that
\begin{equation}
|(C(\pi^3) - C(\pi^\circledast)) - (C^{\sigma^2}(\pi^3) - C^{\sigma^2}(\pi^\circledast))| \le E_1 \label{eqn:piThree}
,\end{equation}
where $E_1$ is the number of constraints that contain one pair of vertices $u,v$ in different order in $\pi^3$ and $\pi^\circledast$ and another pair $\set{s,t} \ne \set{u,v}$ with relative order in $\pi^3$, $\pi^\circledast$ and $\sigma^2$ not all equal. Indeed constraints ordered identically in $\pi^3$ and $\pi^\circledast$ contribute zero to both sides of (\ref{eqn:piThree}), regardless of $\sigma^2$. Consider some constraint $S$ containing a  $\pi^3$/$\pi^\circledast$-inversion $\set{u,v} \subset S$. If the restrictions of the three orderings to $S$ are identical except possibly for swapping $u,v$ then $S$ contributes equally to both sides of (\ref{eqn:piThree}), proving the claim.

To bound $E_1$ observe that the number of inversions $u,v$ is $d(\pi^3,\pi^\circledast)\equiv D$. For any $u,v$ Lemmas \ref{lem:23close}, \ref{lem:piTwo} and \ref{lem:footInv} allow us to show at most $O(\epsilon n^{k-2})$ constraints containing $\set{u,v}$ contribute to $E_1$, so $E_1 = O(D \epsilon n^{k-2}) = O(\epsilon C(\pi^\circledast))$ (Lemma \ref{lem:Dsmall}).

Finally bound $C^{\sigma^2}(\pi^3) - C^{\sigma^2}(\pi^\circledast) = \bar C^{\sigma^2}(\pi^3) - \bar C^{\sigma^2}(\pi^\circledast) \le O(\epsilon C(\pi^\circledast))$, where the equality follows from the definition of $w$ and the inequality is the third part of Lemma \ref{lem:fastCheap}.
\end{proof}

%%%%%%%%%%%%%%
\section{Analysis of $\pi^4$}  \label{sec:analysis}
In this section we prove Theorem~\ref{thm:weakFragile}:
\begin{equation}
C(\pi^4) \le (1+O(\epsilon)) OPT \label{eqn:pi4}
\end{equation}
and
\begin{equation}
d(\pi^4, \pi^*) = O\left(\frac{OPT}{poly(\epsilon) n^{k-2}}\right)\label{eqn:pi4close}
.\end{equation}

If $OPT > \epsilon^4 n^k$ then, as discussed in Section \ref{sec:algo}, Equation (\ref{eqn:pi4}) follows from the algorithm and the additive error guarantee. Equation (\ref{eqn:pi4close}) is vacuous in this case. It remains to show (\ref{eqn:pi4}) and (\ref{eqn:pi4close}) in the case that that Sections \ref{sec:sigmaOne}-\ref{sec:piThree} dealt with: $OPT \le \epsilon^4 n^k$.

First we prove (\ref{eqn:pi4}). We consider three contributions to these costs separately: constraints with 0, 1, or 2+ vertices in $V \sm U$. 

The contribution of constraints with 0 vertices in $V \sm U$ to the left- and right-hand sides of (\ref{eqn:pi4}) are clearly $C(\pi^3)$ and $C(\pi^\circledast)$ respectively. We showed $C(\pi^3) \le C(\pi^\circledast) + O(\epsilon) C(\pi^\circledast)$ in Lemma~\ref{lem:piThree}.

Second we consider the contribution of constraints with exactly 1 vertex in $V \sm U$. Consider some $v \in V \sm U$.  We want to compare $b(\pi^3, v, \sigma^4(v))$ and $b((\restrictO{\pi^*}{U}), v, \pi^*(v))$. Let $p$ be the half-integer so that $Ranking(v \unaryOrdering p \bp \restrictO{\pi^\circledast}{U}) = Ranking(v \unaryOrdering \pi^*(v) \bp \restrictO{\pi^*}{U})$. The algorithm's greedy choice minimizes $b(\pi^3, v, \sigma^4(v))$ so $b(\pi^3, v, \sigma^4(v)) \le b(\pi^3, v, p)$. Now using Lemmas \ref{lem:bChangeFirst} and \ref{lem:Dsmall} we have $b(\pi^3, v, p) \le b(\pi^\circledast, v, p) + O(\sqrt{d(\pi^3, \pi^\circledast)} n^{k-2}) = b(\pi^\circledast, v, p) + O(\sqrt{OPT/n^{k}}n^{k-1})$.  Note $b(\pi^\circledast, v, p) = b((\restrictO{\pi^*}{U}), v, \pi^*(v))$. Let $\gamma = OPT / n^k$.
We conclude by Lemma \ref{lem:Usmall} that the contribution of constraints with exactly 1 vertex in $V \sm U$ is $O(|V \sm U| \sqrt{OPT / n^{k}} n^{k-1}) = O(\frac{\gamma^{3/2} n^k}{\epsilon}) = O(\epsilon OPT)$.

Finally by Lemma \ref{lem:Usmall} there are at most $|V \sm U|^2 n^{k-2} = O((\frac{\gamma}{\epsilon})^2 n^2 n^{k-2}) =O(\epsilon^2 OPT)$ constraints containing two or more vertices from $V \sm U$.

This ends the proof of (\ref{eqn:pi4}).

Finally we prove (\ref{eqn:pi4close}). By Lemma \ref{lem:Dsmall} we have
\[
d(\pi^3,\pi^\circledast)= O(C(\pi^\circledast) / n^{k-2})
.\]
Finally a pair of vertices can only be counted in $d(\pi^4,\pi^*)$ but not $d(\pi^3,\pi^\circledast)$ if at least one of the vertices is in the ambiguous set $V \sm U$. By Lemma \ref{lem:Usmall}
$|V \sm U| = O( \frac{n}{\epsilon} \cdot \frac{OPT}{n^k})$ so there at most $O(n \cdot \frac{OPT}{\epsilon n^{k-1}}) = O(\frac{OPT}{\epsilon n^{k-2}})$ such pairs.

%%%%%%%%%%%
\section*{Acknowledgements}
We would like to thank Venkat Guruswami, Claire Mathieu, Prasad Raghavendra and Alex Samorodnitsky for interesting remarks and discussions.

%%%%%%%%%%%
\nocite{gutin09,KS09,CS98,KS09,AA07,ALS09,RV07,MS08,FK99,FKK06,BFK03,AKK95,AFKK02,CGM09}   %shows refs

%%%%%%%%%%%%%%%%%%%%%%%%%%%%%%%%%%%% BIB %%%%%%%%%%%%%%%%%%%%%%%%%%%%%%%%%%
\bibliographystyle{abbrv}
\bibliography{bib}  %name of bib file

%\appendix

%%%%%%%%%%%%%%
\end{document}